%% file: manuscript.tex
\documentclass[10pt,twocolumn,twoside]{article}

\usepackage[utf8]{inputenc}
\usepackage[letterpaper,left=0.6in,top=0.5in,right=0.6in,bottom=1.0in]{geometry}
\usepackage{hyperref,amsbsy,amsmath,amsthm,amsfonts,mathrsfs,float}
\usepackage{graphicx,subfigure,tabularx,bm,soul}
\usepackage{enumitem,amssymb,lipsum,abstract,parskip}
\usepackage[labelfont=bf]{caption}
\usepackage[dvipsnames]{xcolor}
\usepackage{sectsty}
\usepackage[switch]{lineno}
\usepackage{authblk}
\usepackage{physics}
\usepackage{wrapfig}
\usepackage{pdflscape,rotating}
\newlist{todolist}{itemize}{2}
\setlist[todolist]{label=$\square$}
\usepackage{pifont}

\usepackage{bbm,mathtools,cases,multirow}
\usepackage[normalem]{ulem}

\renewcommand \thesection{\Roman{section}.}
\renewcommand \thesubsection{\Alph{subsection}.}
\allsectionsfont{\nohang\centering\normalsize\uppercase}
\subsectionfont{\nohang\centering\small}

\newcommand{\lb}{\left(}    
\newcommand{\rb}{\right)}
\newcommand{\lcb}{\left\{\, }
\newcommand{\rcb}{\,\right\} }
\newcommand{\lsb}{\left[\, }
\newcommand{\rsb}{\,\right] }

\newcommand{\Dphi}{\Delta\varphi}
\newcommand{\rhat}{\hat{\mathbf{r}}}
\newcommand{\thetahat}{\hat{\bm{\theta}}}
\newcommand{\phat}{\hat{\mathbf{p}}}
\newcommand{\ehat}{\hat{\mathbf{e}}}

\DeclareMathOperator{\arctantwo}{arctan2}
\DeclareMathOperator{\sgn}{sgn}

\graphicspath{{Figures/}}

\usepackage{ulem}
\usepackage{lipsum}

\newtheorem{theorem}{Theorem}[section]
\newtheorem{lemma}[theorem]{Lemma}
\newtheorem{proposition}[theorem]{Proposition}
\newtheorem{corollary}[theorem]{Corollary}

\theoremstyle{definition}

\renewcommand{\thetheorem}{\Roman{section}.\arabic{theorem}}

\begin{document}

\title{Directional bias of a single polarized cell under confinement}

\author[1,*]{Andreas Buttensch\"{o}n}
\author[2,3,*]{Calina Copos}

\affil[1]{\small{Department of Mathematics and Statistics, University of Massachusetts Amherst, Amherst, MA, USA}}
\affil[2]{\small{Department of Mathematics, Northeastern University, Boston, MA, USA}}
\affil[3]{\small{Department of Biology, Northeastern University, Boston, MA, USA}}
\affil[*]{\small{Co-corresponding authors: andreas.buttenschoen@umass.edu, c.copos@northeastern.edu}}


\date{}

\twocolumn[
  \maketitle
\input{abstract}
]

\input{main_text}
\textbf{Competing interests.} The authors declare no conflicts of interest.

\textbf{Data availability.} The article is a theoretical study and does not include any empirical data. The Matlab code used to simulate the model is deposited on Zenodo:~\url{https://zenodo.org/records/20753115}.

\textbf{Use of AI.} The authors used AI tools for language editing purposes, and an AI coding assistant (Anthropic's Claude) to help write and refine the Python code used to generate the figures. All final content, arguments, and conclusions remain the sole responsibility of the authors.

\textbf{Author contributions.} Both authors contributed equally in all aspects of this work including conceptualization, formal analysis, methodology, investigation, writing - original draft, writing - review and editing.

\textbf{Acknowledgments.} We thank Dr. La\"etitia Kurzawa for fruitful discussions on the biological implications of our theoretical results and E. Im for critically reading our manuscript. This work was supported in part by NSF DMS2209494 (C.C.).

\bibliographystyle{unsrt}
\bibliography{rotationalsinglet.bib}

\clearpage 
\onecolumn
\thispagestyle{empty}

\noindent\textbf{SUPPLEMENTAL TABLES:}
\begin{itemize}[itemsep=0mm]
\item[]\textbf{SUPPLEMENTAL TABLE 1}:
Table of parameter descriptions along with the values used in computer simulations and analysis.

\item[]\textbf{SUPPLEMENTAL TABLE 2}: Summary of the dynamical systems analysis results for the singlet. Each mechanism of the fundamental reduced system Eq.~\eqref{eq:gen_master} is switched on alone; $\mathbb{Z}_2$ denotes the reflection symmetry $\Dphi \to -\Dphi$ of Appendix~B, and $\ell = V_0\tau_W$.
\end{itemize}

\noindent\textbf{SUPPLEMENTAL FIGURE LEGENDS:}

\begin{itemize}[itemsep=0mm]
\item[]\textbf{SUPPLEMENTAL FIGURE 1}: (A) Sample trajectories in the absence of confinement forces. (B) Heatmaps of the outcomes for the percentage of coherent rotations and CW rotations with parameter variations in the default unbiased singlet model.

\item[]\textbf{SUPPLEMENTAL FIGURE 2}: Sample trajectories for simulated unbiased ($\mu=0$) singlets.

\item[]\textbf{SUPPLEMENTAL FIGURE 3}: Same as Fig.~S2 but for biased ($\mu=-0.5$) singlets.

\item[]\textbf{SUPPLEMENTAL FIGURE 4}: Same as Fig.~S2 (unbiased singlets), but with an easy frictional direction at an offset angle $\delta=\pi/4$ from the polarity axis. 

\item[]\textbf{SUPPLEMENTAL FIGURE 5}: Phase portraits of the singlet model with chiral cell–wall interactions modulated by parameter $\chi$. Depending on the sign of $\chi$, one rotational equilibrium loses stability. White circles denote stable (or neutrally stable) equilibria, white crosses unstable equilibria, the background colormap indicates the time-averaged winding number $n(T)$, and solid colored curves show representative trajectories.

\item[]\textbf{SUPPLEMENTAL FIGURE 6}: Two-parameter phase diagrams of the anchored singlet combined with each symmetry-breaking coupling: intrinsic bias $\mu$ (top row), chiral wall $\chi$ (middle row), and anisotropic mobility $(\alpha)$ (bottom row). Parameters are $k_W = 1.0, k_0 = 0.5$ so that $\kappa = 1/2$ and $R_{\rm disc} = 1$. White circles denote stable (or neutrally stable) equilibria, the background colormap indicates the time-averaged winding number $n(T)$, and solid colored curves show representative trajectories.

\item[]\textbf{SUPPLEMENTAL FIGURE 7}: (A) Motility outcomes shown as percentages for varying anisotropic frictional coefficient ($\alpha$) for unbiased cells migrating on pattern (v). Grey stars mark percentage of rotating cells, while blue dots indicate clockwise percentage. (B-C) Sample trajectories for simulated unbiased singlets on (B) pattern (v) and (C) pattern (vi).

\end{itemize}

\noindent{\textbf{SUPPORTING MOVIES}}:
\begin{itemize}[itemsep=0mm]
\item[] \textbf{Movie 1}: 2D simulations of an individual cell polarizing and migrating in the clockwise (top), counterclockwise (middle) directions, and non-coherent (bottom) rotational movement.
\end{itemize}

\clearpage


\appendix 

\input{suppmat}

\end{document}

%% file: abstract.tex
\begin{abstract}
\vspace{2mm}

Chiral patterns have been observed in various processes from swirling bacterial colonies to tissue morphogenesis and cytoskeletal organization, yet the physical mechanisms underlying chiral cell motion remain poorly understood. Motivated by experiments demonstrating directional bias in the circular motion of confined cells, we use the tools of dynamical systems analysis with computer simulations to identify minimal intrinsic and extrinsic mechanisms capable of generating persistent biased migration. The dynamical systems framework reveals a common organizing principle: directional bias emerges through changes in the stability and/or basins of attraction of the clockwise and counter-clockwise motility states. We find four distinct routes to such bias. First, intrinsic torque in a polarized cytoskeleton can be spatially integrated to produce biased circular motion. Second, anisotropic cell–substrate friction can generate directional preference when reduced friction along the polarity axis is coupled to a directional offset. Third, a chiral wall-alignment response can also produce a persistent directional preference. Finally, substrate patterns that break mirror symmetry, such as dextral or sinistral ridges and troughs, can likewise bias rotational direction. Together, these mechanisms yield distinct, testable predictions and suggest a unifying lens for experimental interrogation of cellular chirality and the design of synthetic systems with programmable chiral motion.
\end{abstract}
\vspace{5mm}

%% file: main_text.tex
\noindent\textbf{Keywords}: motility; directional bias; biophysical modeling; dynamical system analysis; cell biology.

\section{Introduction}

From embryonic development to cell migration, the establishment of spatial asymmetries is a fundamental biological process. One manifestation of such asymmetry is chirality --- the geometric property of an object that lacks mirror symmetry. Cellular chirality has been observed in diverse settings, including cell migration~\cite{Xu2007,Vecchio2024,Brangwynne2000,Guillamat2022,Fernandez2021,Badih2025,Lee2021,Hachem2024,Chin2018,Erzberger2020,Kozak2023,Lu2024}, alignment on micropatterns~\cite{Tee2023,Wan2011,Chen2012,Turiv2020,Endresen2021,Kaiyrbekov2023}, cytoskeletal organization~\cite{Liu2016,Tee2015,Tee2023}, cell morphology~\cite{Chin2018,Worley2018,Ray2018}, and multicellular tissues~\cite{Cetera2014,Founounou2021}. Although chirality is often linked to cytoskeletal organization, environmental factors such as confinement geometry~\cite{Mahmud2009}, substrate patterning~\cite{Turiv2020,Endresen2021,Kaiyrbekov2023}, and external fields~\cite{Moreau2019,Wang2020,Zhao2019,Wang2020,Renkawitz2019} can also bias migration. Yet the physical mechanisms that generate persistent directional bias remain poorly understood. Determining how chirality emerges at the single-cell level is therefore a necessary step toward explaining how left–right asymmetries arise across biological scales.

To investigate the physical origins of directional biases, we focus on a minimal model of the migration of a polarized active cell within a disk-shaped confinement. Exploiting the simplicity of the modeling framework, we combine computational simulations with dynamical systems analysis to identify the minimal intrinsic and extrinsic mechanisms capable of generating persistent chiral motion. Building on our previous work on confined cell doublets~\cite{Im2025}, we show that multiple symmetry-breaking mechanisms can produce robust directional bias, including intrinsic cytoskeletal chirality, anisotropic friction, substrate patterning, and chiral wall interactions. Despite their apparent differences, these mechanisms share a common design principle: persistent chiral motion emerges from an offset between the body-frame polarity axis and the lab-frame axis governing cell–substrate interactions. These mechanisms suggest concrete strategies for experimentally probing the origins of cellular chirality while informing the design of active and living systems with controllable directional behavior.

The rest of the paper is organized as follows. In Sec.~II we lay out the biophysical model and its assumptions. Throughout, we follow a single observable: the split of cells between clockwise (CW) and counter-clockwise (CCW) rotation, and we ask what makes that split even, biased, or fully one-sided. The answer depends on one property of the underlying dynamics: whether the cell circulates without ever settling (a conservative system) or settles into a preferred state (a dissipative system). In Sec.~III.A, the system is conservative and has no built-in handedness, giving an even CW/CCW split that no parameter can tip. In Sec.~III.B, an intrinsic torque in the protrusion dynamics gives the cell a directional preference but leaves the dynamics conservative, which tilts the CW/CCW split toward one side. Sections~III.C and~III.D make the dynamics dissipative by introducing a chiral offset into one of the two cues that steer the cell's polarity. These two cues are its protrusion direction and the inward direction due to the confining geometry; an anisotropic friction with the substrate underneath selects an ``easy" direction of migration, independent of where the cell starts. In Sec.~III.E, an adhesive spot in the middle of the confining geometry also makes the dynamics dissipative but without imposing handedness: the CW/CCW split stays even, but now a bistability emerges that can produce tunable chirality when coupled with another mechanism. Finally, in Sec.~III.F, we extend the anisotropic-friction mechanism to patterned substrates, where the preferred migration axis is fixed in the laboratory frame. Because these systems no longer admit the same low-dimensional description, we rely on simulations and show that only patterns lacking mirror symmetry generate directional bias. More broadly, our analysis reveals how diverse biological and physical mechanisms can be understood within a common dynamical framework, where chirality emerges through changes in the stability and accessibility of competing motility states.

\section{Generic description of an active particle in confinement}

We propose a minimal mathematical model to capture the interplay between cell polarity, contractility, and spatial confinement (Fig.~\ref{fig:singlet_schematic}). In our model, the cell is represented by the two-dimensional position of its center-of-mass (ignoring spinning), and further detailed morphology, such as the cell membrane, nucleus, and actomyosin cytoskeleton, is ignored.

\begin{figure*}[h]
    \centering
    \includegraphics[width=0.75\textwidth]{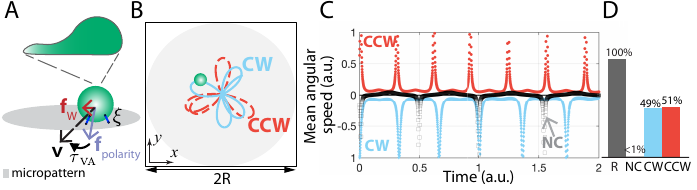}
    \caption{\textbf{Single migrating cell model and its dynamics.} (A) Singlet model schematic. (B) Sample CW and CCW trajectories over one rotational cycle. 
    (C) Angular velocity over~2 arbitrary time units for cells that move non-coherently by switching directionality (black), rotate CW (blue) and CCW (red). (D) Averaged number of rotating singlets and their distribution into the three states: CCW (51\%), CW (49\%), and NC ($<0.5$\%) out of 3200 model simulations.}
    \label{fig:singlet_schematic}
\end{figure*}

Taking the overdamped (or viscous dominated) approximation to a physical system, a cell at location $\mathbf{x}(t) = R(\cos\theta, \sin\theta)^T$ moves with velocity $\mathbf{v}(t)$ according to its local force balance:
\begin{equation}
    \sum \mathbf{f}(\mathbf{x}(t),t) = 0 \Rightarrow \mathbf{v} = \dot{\mathbf{x}} = \mathbb{M} \mathbf{f}_\text{polarity},
\label{eq:singlet_forcebalance}
\end{equation}
where $\mathbb{M}$ is the mobility matrix. In general, this matrix can capture anisotropies, hydrodynamic interactions, or spatial heterogeneities in the substrate. Here, however, we assume $\mathbb{M} = \frac{1}{\xi} \mathbb{I}$, implying that velocities are directly proportional to applied forces with $\xi$ as the viscous drag coefficient. This is equivalent to elastic cell-matrix interactions under conditions of high dissociation rates of these adhesion bonds~\cite{Howard2018}. The forces acting on the point are: (1) frictional drag force between the cell and the surface underneath ($\mathbf{f}_\text{drag} = \xi \dot{\mathbf{x}}$), and (2) active polarity force arising from front-rear signaling of actin-based protrusions and myosin-based contraction ($\mathbf{f}_\text{polarity}$).

\textbf{Cell polarity.} 
Migrating cells have an underlying chemical polarization, indicating the areas of the cell that are likely to protrude (``front-like'') and those likely to contract (``rear-like'')~\cite{LeahReview}. This can include asymmetric distribution of Rho GTPases, with Rac1 activity driving the cell front through lamellipodial extensions, and RhoA promoting myosin contractility in the rear. Rather than explicitly modeling the dynamics of one or more Rho GTPases~\cite{Mori2008,Buttenschon2020,CoposMogilner2020}, we summarize cell polarity with a single spatiotemporal motility force $\mathbf{f}_\text{polarity}$. This force is a vector with direction $\phi$, initially chosen randomly, and magnitude $\gamma_{\mathrm{pol}}$, a model parameter:
\begin{equation}
    \mathbf{f}_\text{polarity} = \gamma_{\mathrm{pol}} \hat{\mathbf{p}} =  \gamma_{\mathrm{pol}}\left(\cos{\phi},\sin{\phi}\right)^{T},
\label{eq:polarity_a}
\end{equation}
\noindent and over time evolves via the mechanism of velocity alignment~\cite{Camley2014}:
\begin{equation}
    \dot{\phi} = \frac{\abs{\vb{v}}}{V_0} \frac{\sin\lb \theta_V - \phi \rb}{\tau_{\rm VA}}.
\label{eq:polarity_b}
\end{equation}
\noindent Here, $\hat{\mathbf{p}}$ is the unit vector with direction $\phi$, $\theta_V$ is the angle of the polar parameterization of the velocity vector $\mathbf{v}$ and computed every time step as $\arctan{(v_y/v_x)}$, $\tau_\text{VA}$ sets the orientational persistence timescale, and $V_0 = \gamma_{\rm pol} / \xi$.
A long timescale, $\tau_\text{VA} \gg 1$, would imply the cell's orientation is insensitive to its own velocity direction and moves with high persistence in the initial polarity direction.
This form is in the same spirit as the one suggested originally by~\cite{Szabo2006} and later adapted by~\cite{Camley2014} and others.

\textbf{Spatial confinement.}
To ensure that the cell remains geometrically constrained to the fibronectin-coated micropattern, we incorporate confinement through a reorientation of the polarity angle in Eq.~\eqref{eq:polarity_b}:
\begin{equation}
    \dot{\phi} = \frac{\abs{\vb{v}}}{V_0} \frac{\sin\lb \theta_V - \phi \rb}{\tau_{\rm VA}}  
        + \frac{\sin\lb \theta_W - \phi \rb}{\tau_W}    
\label{eq:polarity_c}
\end{equation}
where $\theta_W = \pi + \theta$ is the angle pointing from the cell's position toward the center of the domain. 


This implicit formulation captures how cells with finite spatial extent continuously sense the domain boundaries through their cytoskeleton. Since the polarization represents an averaged cytoskeletal response, the reorientation mechanism naturally accounts for the cell's distributed sensing of geometric constraints without requiring an artificial radial potential that switches on and off at a specific threshold. In the absence of such a confinement effect, the cells engage in persistent directional motion (Fig.~S1A).

\textbf{Quantification of rotational motion by the winding number.} To quantify the rotational behavior of trajectories, we use the position angle $\theta(t)$ 
and its instantaneous angular velocity $\omega(t) \equiv \dot\theta$. 
For a trajectory with $R(t) > 0$ on $[0, T]$, the \emph{winding number} is
\begin{equation}\label{eq:Theta_int}
    n(T) = \frac{1}{2\pi} \int_0^T \omega(t)\,dt = \frac{1}{2\pi}\lb \theta(T) - \theta(0) \rb.
\end{equation}
To accurately compute cumulative rotation, angular increments are unwrapped to account for the periodicity of $\theta \in [-\pi,\pi]$, ensuring that full revolutions are retained in the winding number calculation. 
A trajectory has net CCW rotation if $n(T) > 0$, net CW rotation if
$n(T) < 0$, and no net rotation if $n(T) = 0$. 



\section{Results}

\subsection*{Movement in confinement produces a conserved system with reflection-symmetry}

To probe the system response, we place a single cell in the center of the disk-shaped micropattern, induce polarization in a random direction, and allow the system to evolve according to Eqs.~\eqref{eq:singlet_forcebalance}, \eqref{eq:polarity_a}, and~\eqref{eq:polarity_c} for over 3 full rotation cycles (Fig.~\ref{fig:singlet_schematic}A). Fig.~\ref{fig:singlet_schematic}B plots one rotational cycle, in either direction, as the singlet navigates the confining geometry in our simulation. Notably, we find that the motion of our default cell displays a petal-like pattern unlike the circular paths observed with cell doublets by the same model~\cite{Im2025}.
Along with trajectories, we plot the angular speed for several initializations and the averaged behavior over 3,200 simulations.\footnote{ Without making assumptions about the underlying ratio of a binary proportion ($p=0.5$), and assuming a margin of error $E = \pm3\%$, we require $n = (Z^2\times p \times (1-p)) / E^2 = 1,068$ samples for $95\%$ confidence ($Z=1.96$). Assuming that the lowest ratio of rotation to non-coherent movement never goes below $30\%$, we conclude that $3,200$ samples yield $95\%$ confidence with error intervals of $\pm 3\%$. Any cases where the ratio of rotation to non-coherent movement falls below $40\%$ will be classified as statistically unreliable.}
(Fig.~\ref{fig:singlet_schematic}C-D). Equations of motion are solved numerically using the Forward Euler integration scheme, and parameter values are provided in Table~S1.

\begin{figure*}[h!]
    \centering
    \includegraphics[width=0.65\textwidth]{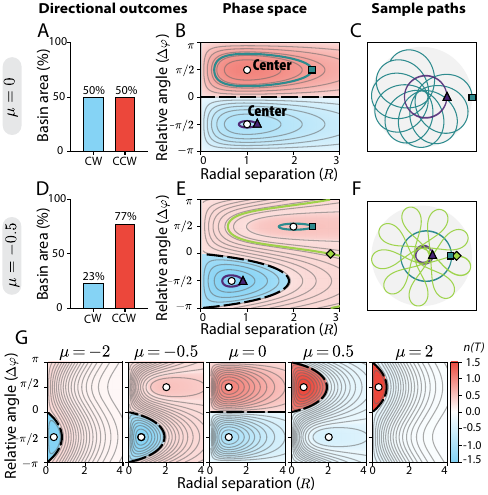}
    \caption{\textbf{Directionality bias can either persist or disappear in confining geometries when cells have an added intrinsic torque in their polarity orientation.} (A, D) CW/CCW motility outcomes for $\mu=0$ and $\mu=-0.5$. (B, E) Phase portraits in $(R,\Dphi)$ colored by the time-averaged winding number $n(T)$, with representative trajectories (gray and colored lines), flow direction (arrows), invariant lines (dashed black), and neutral centers (white circles). (C, F) Example cell trajectories corresponding to selected initial conditions in (B, E). (G) Phase space scanning the intrinsic bias parameter $\mu$ from $-2$ to $2$, colored by $n(T)$, shows the breakdown of CW/CCW rotational states.}
    \label{fig:singlet_results}
\end{figure*}

Cells display a variety of motility patterns (Fig.~S2), including cells moving persistently in one direction, either clockwise (CW) or counterclockwise (CCW), or switching direction (Fig.~\ref{fig:singlet_schematic}C-D, Movie 1), thus not moving coherently (NC). The marginal case of non-coherent movement appears at a very low frequency. To classify the emergent behavior, 
winding number is used -- non-coherent movement occurs if the cell switches directionality over some arbitrary percentage.\footnote{Over 20\% unidirectional is considered coherent rotation.}, with clockwise for negative values or counterclockwise for positive values of the winding number. This means that our model, at its default parameters, predicts three-state behavior, with nearly all cells persistently rotating either clockwise or counterclockwise with equal transition rates (Fig.~\ref{fig:singlet_schematic}E, Fig.~\ref{fig:singlet_results}A).

To verify the computational simulation results, we transform to polar coordinates where $\mathbf{x} = R \left( \cos\theta, \sin\theta \right)^T$ and introduce the phase-lag $\Dphi = \phi - \theta$, quantifying the angular offset between polarity and positional angle. 
From the equation of motion, we find that $\mathbf{v} = \gamma_\text{pol} \hat{\mathbf{p}}/\xi$, and consequently the velocity is parallel to the polarity, so that $\theta_V = \phi$. The polarity equation reduces to $\dot{\phi} = \sin(\Delta \phi)/\tau_W$ since $\sin(\theta_W-\phi) = \sin(\theta + \pi - \phi) = \sin(\pi-\Delta \phi) = \sin \Delta \phi$. By re-writing the equation of motion in polar coordinates (Appendix A), and subtracting $\dot\theta$ from $\dot\phi$, one can derive the closed system on $(R, \Dphi)$:
\begin{equation}\label{eq:sym_reduced_nobias}
    \dot R = \frac{\gamma_\text{pol}}{\xi}\cos\Dphi,
    \quad
    \dot{\Dphi} = \sin\Dphi\lb \frac{1}{\tau_W} - \frac{\gamma_\text{pol}}{\xi R}\rb.
\end{equation} 

Phase plane analysis of this reduced system explains the equal CW/CCW split (Fig.~\ref{fig:singlet_results}B, Appendix B). The phase space divides into two regions, an upper half in which the cell migrates counter-clockwise and a lower half in which it migrates in the clockwise direction, set apart by invariant dividing lines that the dynamics never crosses. Therefore, a cell retains whichever direction it starts with, and each region is organized around a single center, the only state in which $R$ does not oscillate (Fig.~\ref{fig:singlet_results}C). 

As suggested by the closed trajectories of the reduced 2D plane, we find that the system admits a conserved Hamiltonian quantity, $H(R,\Dphi)$ 
(Appendix B). Trajectories lie on the level sets of $H$ (Fig.~\ref{fig:singlet_results}B). Moreover, the reduced system is invariant under reflection; every trajectory $(R,\Dphi)$ has a mirror image trajectory $(R,-\Dphi)$ (Appendix B). Since the model parameters are scalar quantities that are unchanged by this reflection, no parameter variation can break the equal CW/CCW split. We confirmed this result numerically with parameter sweeps (Fig.~S1B). 

\subsection*{Intrinsic torque breaks direction symmetry}

In pursuit of symmetry breaking in the system, and emergence of directionality preference, we probed the model's response to the presence of intrinsic bias in the polarity machinery. We extended the polarity model in Eq.~\eqref{eq:polarity_c} 
by adding an intrinsic torque $\mu$:
\begin{equation}
    \dot{\phi} =
        \frac{1}{\tau_\text{VA}} \frac{\abs{\vb{v}}}{V_0} \lsb 
            \sin\lb\theta_V - \phi \rb + \mu             
        \rsb 
    + \frac{1}{\tau_\text{W}}
        \sin\lb\theta_W - \phi\rb.   
\label{eq:polarity_d}
\end{equation}
Here, $\mu < 0$ leads to CW rotations, and $\mu > 0$ to CCW rotations.
For example, for $\mu=-0.5$, this intrinsic bias in the evolution of the polarity angle can translate into a bias towards CW rotational movement (Fig.~\ref{fig:singlet_results}D, trajectories in Fig.~S3). 

Phase plane analysis confirms the model simulations: for $\mu\neq0$ the direction symmetry is broken (Fig.~\ref{fig:singlet_results}D-E, Appendix C). Importantly, the bias does this without destroying the conserved structure of the unbiased model: the reduced system still admits
a conserved quantity 
so the two rotational states remain neutrally stable centers. What the intrinsic bias does change is the balance between the two regions: each center persists but shifts, so that for $\mu < 0$ the clockwise center tightens into a smaller orbit (Fig.~\ref{fig:singlet_results}F) and claims a larger share of initial conditions Fig.~\ref{fig:singlet_results}G), while counter-clockwise center widens and claims fewer initial conditions. Every trajectory retains a definite, signed winding number, clockwise or counter-clockwise.

As the intrinsic bias grows further, it pushes the disfavored center into ever-wider orbits, until a critical strength ($\abs{\mu} = 1/\tau_W$), where its radius diverges and it collides with a saddle point sitting at infinity, annihilating in a saddle-center bifurcation (see Appendix C). Beyond this point 
a single rotational state survives, CCW for $\mu > 1/\tau_W$ and CW for $\mu < -1/\tau_W$, and the cell turns one way regardless of how it starts. An intrinsic bias ($\mu \neq 0$) is therefore both necessary and sufficient to produce directed rotation in a single cell. In previous work~\cite{Im2025}, we found that this type of dynamics, extended to cell pairs, can produce richer behavior, including regimes where doublets overcome an intrinsic clockwise bias to rotate predominantly counter-clockwise.

\subsection*{Lower frictional drag along an offset direction creates directional bias}

Above, we explored the possibility that a directional preference in cellular motility can arise from an intrinsic torque in the cellular response (polarity), but how that can emerge remains opaque. To illustrate one possible underlying mechanism, we allow cell-substrate frictional drag to be anisotropic: moving in one direction is easier than moving across it. The anisotropy could be introduced through several distinct possibilities, including intrinsic cytoskeletal alignment, confinement-induced shape anisotropy, or patterning of ridges and troughs directly onto the surface (see Section III.F). We show that a frictional anisotropy in the model can produce a preferred migratory direction through a mechanism that is dissipative rather than the conserved dynamical system in the previous section (Fig.~\ref{fig:frictionanisotropy_results1}).

\begin{figure*}[h!]
    \includegraphics[width=0.95\textwidth]{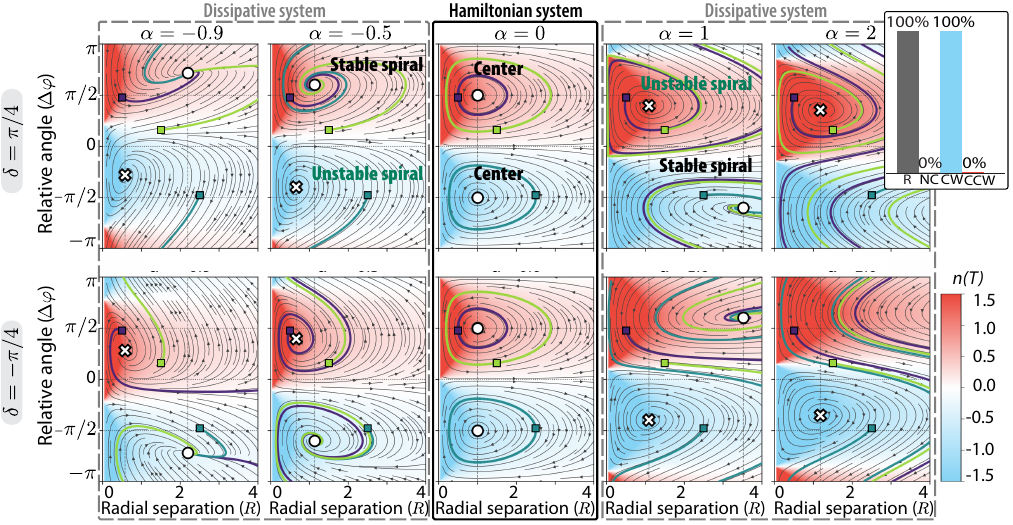}
    \caption{\textbf{Phase portraits illustrate the transition from conserved (Hamiltonian) to dissipative dynamics} and the resulting changes in the stability of CW and CCW equilibria with offset angle $\delta$ and frictional anisotropy $\alpha$. Top: $\delta=\pi/4$; bottom: $\delta=-\pi/4$. From left to right, $\alpha$ increases from negative to positive values. Phase space $(R,\Dphi)$ is colored by the time-averaged winding number $n(T)$. White circles denote stable (or neutrally stable) equilibria, white crosses unstable equilibria, gray curves representative trajectories, and arrows the flow direction. Inset: fraction of rotating singlets and their distribution among motility states for $\delta=\pi/4$ and $\alpha>0$.}
    \label{fig:frictionanisotropy_results1}
\end{figure*}

First, the intrinsic bias is removed from Eq.~\eqref{eq:polarity_d}. To model anisotropic friction, we introduce an ``easy'' axis, $\hat{\mathbf{e}}_\parallel$; we assume that motion parallel to this easy axis is governed by the friction coefficient $\xi_\parallel$, and $\xi_\perp$ in the perpendicular direction. The ratio of these coefficients controls migratory persistence anisotropy ($\xi_\perp>\xi_\parallel$) by promoting alignment with the easy direction. 
The mobility matrix tensor is 
\begin{equation}
    \mathbb{M} = \frac{1}{\xi_\perp} \lb I + \alpha \hat{\vb{e}}_{\parallel} \hat{\vb{e}}_{\parallel}^{T} \rb, 
    \quad 
    \alpha = \frac{\xi_\perp - \xi_\parallel}{\xi_\parallel} > -1.
\label{eq:mobility_defn}
\end{equation} 
The $\alpha$ term introduces an orientation-dependent coupling between radial and angular dynamics. 
In the limit that $\alpha=0$ ($\xi = \xi_\perp=\xi_\parallel$), Eq.~\eqref{eq:mobility_defn} reduces to $\mathbb{M} = \frac{1}{\xi} \mathbb{I}$ as expected. Note that for $\alpha \leq -1$ the mobility matrix is no longer positive-definite and we exclude this case.

If we assume that easy axis is lagging the polarity by a fixed offset angle $\delta$, $\hat{\mathbf{e}}_\parallel (\phi) = \left(\cos(\phi+\delta),\sin(\phi+\delta)\right)^{T}$, one can show that the update velocity becomes
\begin{equation}
    \mathbf{v} = \frac{\gamma_\text{pol}}{\xi_\perp} 
        \left[(1+\alpha \cos^2\delta)\hat{\mathbf{p}} + \frac{\alpha}{2}\sin(2\delta)\hat{\mathbf{p}}_\perp\right].
\label{eq:sectionc_eom}
\end{equation}
Here, $\hat{\mathbf{p}} = \mathbf{f}_\text{polarity}/\gamma_\text{pol} = (\cos\phi,\sin\phi)^{T}$ as in Eq.~\eqref{eq:polarity_a}, and $\hat{\mathbf{p}}_\perp = (-\sin\phi,\cos\phi)^{T}$ (Appendix D). We find that the cell’s velocity is not perfectly aligned with its polarity vector due to the presence of the second term in Eq.~\eqref{eq:sectionc_eom}. Instead, the cell picks up a small sideways component (proportional to $\sin(2\delta)$), meaning it ``slips'' slightly sideways as it moves. Consequently, the velocity alignment term 
does not vanish and remains as a constant applied torque to the system: 
$    \dot{\Dphi} = \frac{1}{\tau_W} \sin\lb \theta_W - \phi \rb + \mu_\text{eff}$,
where 
$    \mu_\text{eff} = \alpha \sin(2\delta)/2\tau_\text{VA}$.

For $\delta = 0$, the system simplifies to the unbiased case, with no intrinsic directional preference, as in Eq.~\eqref{eq:sym_reduced_nobias}. However, a nonzero offset ($\delta \neq 0$) does more than add a bias; it changes
the dynamics qualitatively. 
The offset destroys the Hamiltonian dynamics and makes the reduced $(R, \Dphi)$ flow non-conservative. For moderate anisotropy the two neutrally stable centers become hyperbolic spirals, one stable and one unstable; for stronger anisotropy the sink instead escapes to infinity, a nonphysical runaway we set aside
(Fig.~\ref{fig:frictionanisotropy_results1}, Appendix D). These equilibria are spirals, never saddles ($\det J > 0$), and their stability is set entirely by the 
sign of $\alpha\sin(2 \delta)$. 
We find that the resulting bias produces all-or-nothing CW/CCW outcomes rather than a tilting of outcome probabilities; for example, if $\alpha > 0$ (favored alignment with the easy axis), a positive $\delta$ stabilizes the clockwise state, and destabilizes the counter-clockwise one (top right, Fig.~\ref{fig:frictionanisotropy_results1}).

Because the stable spiral is a global attractor for moderate anisotropy, this directional preference is robust to initial conditions: every initialization of the singlet converges to the same rotational state, which our computational simulations
confirm (all tested initial conditions produce persistent clockwise rotation; inset Fig.~\ref{fig:frictionanisotropy_results1}).
Beyond the threshold anisotropy the interior attractor is lost and the reduced trajectory instead spirals outward to unbounded
radius (Appendix~D). Such an escape is unphysical for a cell confined to the disc, so we do not consider it further.
The dissipation also reshapes the orbits, replacing the closed petal-like trajectories of the conservative model with 
spirals that settle onto a singular circular orbit (Fig.~S4).

Taken together, these results demonstrate that reduced cell-substrate frictional resistance along the polarity axis, combined with an offset arising from chiral cytoskeletal organization, is sufficient to generate a directional bias in cell movement. The next mechanism breaks symmetry in the same dissipative way, but through the cell's interaction with the confining geometry.

\subsection*{Chiral alignment due to confinement selects a single motility direction}

Another mechanism that breaks the reflection symmetry of Eqs.~\eqref{eq:singlet_forcebalance}-\eqref{eq:polarity_c} is through confinement effects. Here, we return to isotropic cell-substrate friction, so the cell's velocity is again parallel to the polarity ($\theta_V = \phi$) and the velocity alignment term drops out; the symmetry breaking comes entirely from confinement. We offset the alignment of polarity direction by a fixed angle $\chi$, replacing $\sin\lb\theta_W - \phi\rb$ with $\sin\lb\theta_W - \phi + \chi\rb$. Together with $\theta_W = \theta + \pi$, the confinement term becomes $\sin\lb\Dphi - \chi\rb/\tau_W$, and the closed system on $(R, \Dphi)$ is
\begin{equation}
    \dot R = \frac{\gamma_{\rm pol}}{\xi}\cos\Dphi, 
    \quad 
    \dot{\Dphi} = \frac{\sin\lb\Dphi -\chi\rb}{\tau_W} - \frac{\gamma_{\rm pol}}{\xi R}\sin\Dphi.
    \label{eq:sym_reduced_chiralwall}
\end{equation}
Notably, the symmetric flow in Eq.~\eqref{eq:sym_reduced_nobias} has shifted by $\chi$. 

Expanding the chiral term in Eq.~\eqref{eq:sym_reduced_chiralwall} gives
$\frac{1}{\tau_\text{W}} \sin(\Dphi - \chi) = 
        \frac{1}{\tau_\text{W}} \cos\chi\sin\Dphi - \frac{1}{\tau_\text{W}} \sin \chi \cos\Dphi$.
The second term in the expression breaks reflection symmetry ($\Dphi \rightarrow -\Dphi$), but unlike the constant torque $\mu$, it carries a $\cos\Dphi$ factor, so it vanishes at the two equilibria $\Dphi = \pm\pi/2$. This makes it tempting to conclude that $\chi$ produces no bias, and indeed $\chi$ does not shift the steady states. Instead, exactly as for the offset from the polarity axis, it renders the system dissipative and splits the stability of the two states: centers become hyperbolic spirals, one stable and one unstable, at a common rescaled radius (Appendix E, Fig.~S5). The stability of the equilibria, and hence the handedness of the motility, is dictated by the sign of parameter $\chi$. Notably, chiral confinement produces the same all-or-nothing bias as the polarity axis offset, by splitting the stability of the two rotating states rather than displacing them.

\subsection*{A center adhesive spot turns intrinsic chirality mechanisms into tunable bias}

In an effort Ferencto introduce different cell-matrix adhesion topology, we first consider a radial cell-matrix force at the center of the disk-shaped pattern (Fig.~\ref{fig:frictionanisotropy_results2}A, pattern (ii)) in Eq.~\eqref{eq:singlet_forcebalance}:
$\vb{f}= \vb{f}_\text{polarity} + \vb{f}_\text{anchor} = \gamma_{\rm pol} \hat{\vb{p}} + f(R)\hat{\vb{r}},$ where $f(R) = k_0 R$. 
The force is radially oriented and is equivalent to an elastic spring tethering the cell to the origin; by controlling the sign of its coefficient $k_0$, the force can be either repelling ($k_0<0$) or attracting ($k_0>0$). The reduced dynamics are given by
\begin{equation}
\begin{split}
    \dot R &= \frac{\gamma_{\rm pol}}{\xi}\cos\Dphi + \frac{f(R)}{\xi}, \\
    \dot{\Dphi} &= \sin\Dphi\lsb \frac{1}{\tau_W} - \frac{\gamma_{\rm pol}}{\xi R} - \frac{f(R)}{\gamma_{\rm pol}\tau_{\rm VA}} \rsb.
\end{split}
\end{equation}

In the absence of the additional radial force, Eq.~\eqref{eq:sym_reduced_nobias} is recovered. For $f \neq 0$, the cell's velocity is no longer parallel to its polarity direction, so, as with the polarity offset, the flow becomes dissipative and the neutrally stable centers become spirals. 
Yet, similar to the default cell on a disk-shaped micropattern, the outcomes remain unbiased, with an equal likelihood of CW and CCW motion, as evidenced by the equally sized, well-defined basins of attraction of the two steady states  (Fig.~\ref{fig:frictionanisotropy_results2}D(i)). We conclude that the presence of an adhesive dot alone does not produce bias in the directional motion of the singlet cells exploring the confined geometry.

\begin{figure*}[htbp]
    \centering
    \includegraphics[width=0.5\textwidth]{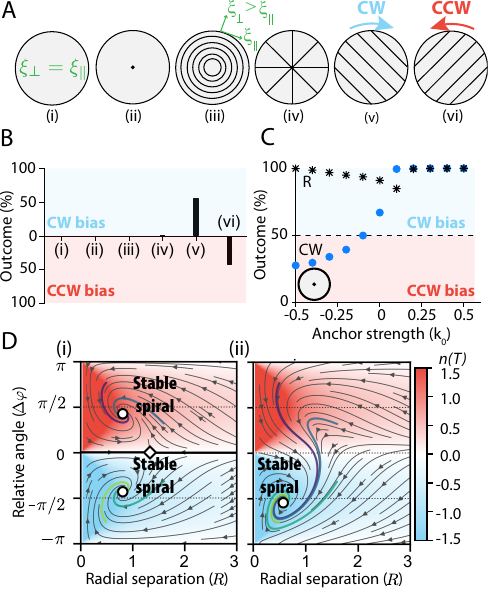}
    \caption{\textbf{Model predicts directional biases on substrate patterns that lack mirror symmetry.} (A) Substrate adhesive patterns. (B) Directionality outcomes showing deviations from equal CW/CCW distribution for substrate patterns in (A). (C) Motility outcomes of the CW-biased cell ($\mu=-0.5$) on pattern (ii) as a function of anchor strength $k_0$. (D) Phase portraits in $(R,\Dphi)$ for pattern (ii) with a center adhesive dot ($k_0=0.5$, $k_W=1$), for (i) an unbiased cell ($\mu=0$) and (ii) a CW-biased cell ($\mu=-0.5$), respectively.}
    \label{fig:frictionanisotropy_results2}
\end{figure*}


What the presence of the additional radial force enables is a tunable bias in the chirality mechanisms we have discovered so far. Because the two equilibria are not attractors, any symmetry breaking coupling (an intrinsic torque $\mu\neq0$, a polarity axis offset, $\delta\neq0$, or a chiral confinement $\chi\neq0$) tilts the basin boundary toward the disfavored sink and enlarges the favored basin, so the population handedness shifts continuously with the symmetry breaking strength. For example, for the case of a CW intrinsically biased cell ($\mu<0$), changing the directionality of the anchor (attractive or repulsive) reveals either an overall CW or a CCW directional bias in both the phase space analysis and model simulations (Fig.~\ref{fig:frictionanisotropy_results2}B-D). With an attracting anchor, CCW bias emerges, while a repulsive anchor enhances the intrinsic CW bias. 

Dynamical system analysis further shows that a strong enough intrinsic bias ($\mu$) or chiral confinement ($\chi$) removes
the disfavored state and makes the cell one-handed, by one of two routes: the intrinsic bias slides the disfavored sink 
into the basin boundary until the two annihilate (a saddle-node bifurcation), whereas the chiral confinement drives the 
disfavored sink unstable (a Hopf bifurcation) (see Appendix~F, Fig.~S6). The polarity axis offset (anisotropic friction)
is the exception: within the physical range its effective torque is bounded and too weak to remove a sink, so the 
$(\alpha, k_0)$ plane stays bistable and only skews the population. This behavior is illustrated in Fig.~\ref{fig:two_par}:
under some conditions, cells can rotate in either direction, but sufficiently strong intrinsic bias ($\mu$) or chiral
cell-wall interactions ($\chi$) ultimately favor a single direction of rotation.

\begin{figure*}
    \centering
    \includegraphics[width=0.5\linewidth]{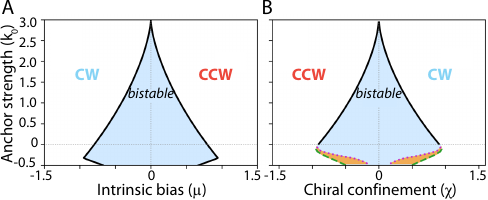}
    \caption{\textbf{Two-parameter bifurcation diagrams for the single cell on a substrate with a center adhesive spot} with (A) intrinsic bias ($\mu$) and (B) chiral cell-wall interactions ($\chi$). In both cases, there is a range of conditions where clockwise (CW) and counterclockwise (CCW) rotation can both occur. As the strength of the intrinsic bias or chiral confinement increases, the system transitions to a regime in which a single rotational direction is preferred.}
    \label{fig:two_par}
\end{figure*}

\subsection*{Substrate patterning as an extrinsic route to directional biases}

A final mechanism we report on for attaining a directional bias in confined motility is through patterned substrates. We focus on patterns as depicted in Fig.~\ref{fig:frictionanisotropy_results2}A and compare the CW/CCW split outcomes against the default disk-shaped isotropic pattern (Fig.~\ref{fig:frictionanisotropy_results2}B). In contrast to the previous sections, these patterns do not generally reduce to the same low-dimensional (2D) dynamical system and therefore are not amenable to the same analytical treatment. Nevertheless, their behavior can still be systematically characterized through numerical simulations. 

To model a patterned substrate, we assume that there is spatial variation along the easy axis, $\mathbf{\hat{e}}_\parallel = g \mathbf{\hat{x}} + h \mathbf{\hat{y}}$, where $g, h$ are components of $\mathbf{\hat{e}}_\parallel$ in the lab frame such that $g^2+h^2=1$. A simple computation in 2D gives
$\mathbb{M} = \frac{1}{\xi_\perp} \lb I + \alpha \begin{pmatrix} h^2 & -gh \\ -gh & g^2\end{pmatrix} \rb,$
where $\alpha = \frac{\xi_\perp - \xi_\parallel}{\xi_\parallel}>-1$ is defined as before in Eq.~\eqref{eq:mobility_defn}. In Fig.~\ref{fig:frictionanisotropy_results2}A, substrate patterns (iii) and (iv) are defined by $g = g(\vb{r}) = \cos(\theta+\theta_0)\hat{\vb{x}}$ and $h = h(\vb{r}) = \sin(\theta+\theta_0)\hat{\vb{x}}$ with $\theta_0 = \pi/2$ and $0$, respectively. Patterns (v) and (vi) utilize $g = 1/\sqrt{2}\, \hat{\vb{x}}$ and $h = \mp 1/\sqrt{2} \,\hat{\vb{y}}$, respectively. A summary of the averaged singlet motility outcomes is provided in Fig.~\ref{fig:frictionanisotropy_results2}B. 

We find that the even distribution in motility direction is disrupted only by patterns that lack mirror symmetry: patterns (v) and (vi) in Fig.~\ref{fig:frictionanisotropy_results2}A. For the case of mirror symmetric patterns (Fig.~\ref{fig:frictionanisotropy_results2}A(iii)-(iv)), further varying $\alpha$ does modulate the number of rotational outcomes, but does not impact the directionality bias (Fig.~\ref{fig:frictionanisotropy_results2}B). For substrate patterns that lack mirror symmetry, cases (v) and (vi), the model predicts the emergence of a directional bias provided that the frictional anisotropy is sufficiently strong ($\alpha$, Fig.~S7A), even in the absence of any additional symmetry-breaking mechanisms. The resulting trajectories retain their characteristic petal-like structure but become increasingly skewed and elongated along the easy frictional direction (Fig.~S7B-C). This is because the lower frictional coefficient makes it easier to travel along that direction. As the degree of frictional anisotropy increases, the magnitude of the directional bias grows; however, the probability of observing persistent migration correspondingly decreases (Fig.~S7A). Instead, the cells increasingly switch directionality. For the results shown in Fig.~\ref{fig:frictionanisotropy_results2}B, we selected $\alpha=0.4$ as a representative value that produces a measurable directional bias while maintaining sufficient persistence. Because persistent trajectories become less frequent under these conditions, the number of simulation realizations was increased four-fold to obtain robust statistics. More broadly, the results suggest that spatial patterning of the cellular microenvironment may provide a simple extrinsic mechanism for guiding chiral migration and establishing directional preference in confined cells.

\section{Discussion}

Chirality is observed across biological scales, yet it is unknown how microscopic asymmetries converted into robust, macroscopic directional behavior? Here, we identify four minimal routes through which persistent chiral migration can emerge (Fig.~\ref{fig:hypothesis_attempt}): (1) an intrinsic torque manifested in the front-rear polarity establishment, (2) a frictional anisotropy along the protrusion direction (with an offset), (3) an intrinsic torque manifested due to geometric confinement, and (4) a frictional anisotropy along physical patterns on the substrate. Although these mechanisms originate from distinct biological or environmental sources, they share a common principle: directional bias arises when local asymmetries are coupled to a reference frame that breaks mirror symmetry. 

\begin{figure*}[h!]
    \centering
    \includegraphics[width=0.75\textwidth]{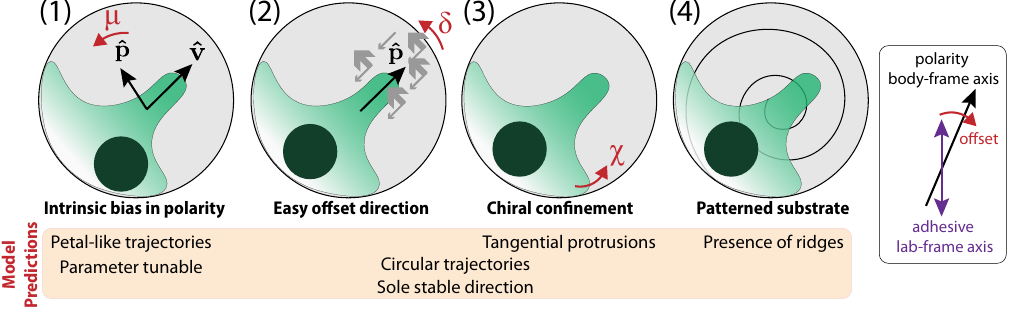}
    \caption{\textbf{Summary:} Identified mechanisms for chiral rotational movement of singlets in confining micropatterns: (1) intrinsic torque $\mu$ in polarity direction, (2) offset from the polarity direction through an ``easy" frictional direction, (3) chiral confinement forces, and (4) chiral physical patterning on the substrate. Model suggests that chirality in directional individual cell migration emerges with an offset between polarity axis and cell-substrate adhesion axis. 
    }
    \label{fig:hypothesis_attempt}
\end{figure*}

These identified mechanisms have been reported biologically. Experiments in dHL60 cells exposed to a uniform chemoattractant found a strong left-right bias in spontaneous polarization relative to the centrosome~\cite{Xu2007}. Because no external directional cue was present, these results suggest an intrinsic chiral asymmetry in the polarity machinery. In our framework, such asymmetry is represented by either an intrinsic torque or an offset angle between the polarity and motility axes, which acts as a symmetry-breaking term and biases the system toward persistent CW or CCW motion. Potential biological origins of such an intrinsic offset include persistent cytoskeletal anisotropy associated with mechanical memory~\cite{Gomez2026} and asymmetries generated by nuclear positioning within the cell~\cite{Renkawitz2019}. Consistent with our third mechanism for chiral motion, recent work suggests that intrinsically chiral actin organization can bias the orientation of cells relative to tissue boundaries, causing boundary cells to adopt a preferred left- or right-tilted alignment~\cite{Shi2025ordering}. This framing is congruent with our proposed chiral wall-alignment mechanism: the boundary acts as a reference frame, cells align to it with a handed offset, and that local interaction is sufficient to produce persistent large-scale chirality. Lastly, spatially constrained experiments of cells placed on ridge patterned surfaces~\cite{Kim2009,Ray2017,Su2023,Dong2019,Turiv2020,Endresen2021,Kaiyrbekov2023} support our fourth and final mechanisms for chiral motion.

A recurring theme across our results is that chirality is not solely an intrinsic property of the cell. Instead, environmental interactions can reveal, amplify, suppress, or even reverse an underlying directional preference. Fixed adhesion sites, geometric constraints, neighboring cells, or substrate architectures can serve as reference frames that transform latent asymmetries into observable migratory biases. This observation suggests that experimentally measured chirality may reflect an interplay between cellular properties and environmental context rather than either factor alone.

Accompanying computational simulations, the analytical framework reveals a common dynamical organization underlying chiral migration. Across multiple scenarios, the system exhibits competing CW and CCW attractors whose relative stability is tuned by intrinsic and extrinsic symmetry breaking effects. Different mechanisms act by reshaping this dynamical landscape, altering basin sizes, destabilizing equilibria, or transforming conservative dynamics into dissipative ones. Viewed through this lens, diverse manifestations of chirality can be understood as different perturbations of a shared underlying dynamical structure.


Our work provides a unifying framework for understanding how chirality can emerge from the interaction of cellular polarity, mechanics, and environmental structure. By identifying minimal routes to directional bias and the signatures associated with each, the framework offers experimentally testable hypotheses for probing the origins of biological chirality. At the same time, the principles uncovered here may inform the design of synthetic active materials and engineered systems, such as microvessels, that exploit symmetry breaking to achieve controlled chirality.

%% file: suppmat.tex
\begin{center}%
  {\LARGE Supporting Material for ``Directional bias of a single polarized cell under confinement''\par}%
  \vskip 1.5em%
  {\large
    \lineskip .5em%
    \begin{tabular}[t]{c}%
      Andreas Buttensch\"{o}n [a] \and Calina Copos [b,c]%
    \end{tabular}\par}%
  \vskip 1em%
\end{center}%
\par
\vskip 1.5em%
\setcounter{page}{1}

\setcounter{figure}{0}
\setcounter{table}{0}
\renewcommand{\thefigure}{S\arabic{figure}}
\renewcommand{\thetable}{S\arabic{table}}

\begin{table*}[!h]
\begin{center}
\begin{tabular}{llc}
\textbf{Parameter}       & \textbf{Description}       & \textbf{Value (a.u.) } \\ \hline
$r$ & Micropattern radius & 0.5 \\
$\Delta t$ & Simulation time step & 0.001 \\
$T_\text{max}$ & Total simulation time & 2 \\
\underline{\textit{Cellular parameters}} & & \\
$\mathbf{x_0} = (x_0,y_0)$ & Initial position of cell & $(0.01,0)$ \\
$\xi$ & Friction coefficient & 1.0 \\
$\gamma_\text{pol}$ & Polarity strength & 1.0 \\
\underline{\textit{Cell confinement parameters}} & & \\
$\tau_\text{W}$ & Confinement timescale & 0.0455 \\
\underline{\textit{Polarity parameters}} & & \\
$\mu$ & Intrinsic cellular bias & $-0.5$ (CW), $0.5$ (CCW) \\
$\tau_\text{VA}$ & Velocity alignment timescale & 0.5 
\end{tabular}
\end{center}
\caption{Table of parameter descriptions along with the values used in computer simulations and analysis.}
\label{Tab:modelparams}
\end{table*}

\begin{table}[H]
\centering
\small
\renewcommand{\arraystretch}{1.3}
\begin{tabularx}{\linewidth}{|p{2cm}|p{0.3cm}|p{1.6cm}|X|}
\hline
\textbf{Mechanism} & \textbf{SI} & $\mathbb{Z}_2$ broken? & \textbf{Phase portrait \& Ensemble outcome} \\
\hline
None & B & No & Two neutrally stable centers at $(\ell, \pm\pi/2)$; conserved Hamiltonian; equal $50/50$ CW/CCW. \\ \hline
Intrinsic torque & C & Yes & Centers displaced in radius; one escapes to infinity for $|\mu| > 1/\tau_W$; tunable bias. \\ \hline
Anisotropic friction & D & Yes & With moderate anisotropy centers become one spiral sink and one source spiral; single global attractor (one-handed bias). \\ \hline
Chiral wall & E & Yes & Centers become one spiral sink and one source spiral; single global attractor (one-handed bias). \\ \hline
Anchor & F & No & Centers become two spiral sinks plus axis saddles; globally bistable; equal $50/50$ CW/CCW. \\ \hline
\quad Anchor + 1 & F & Yes & Basins of attraction tilt; the disfavored sink is removed by a saddle-node (intrinsic bias or anisotropic friction) or a Hopf (chiral confinement) bifurcation. \\
\hline
\end{tabularx}
\caption{Summary of the dynamical systems analysis results for the singlet. Each mechanism of the fundamental system Eqs.~\eqref{eq:gen_master} is switched on alone; $\mathbb{Z}_2$ denotes the reflection symmetry $\Dphi \to -\Dphi$ of Appendix~B, and $\ell = V_0\tau_W$.}
\label{tab:roadmap}
\end{table}

\begin{figure*}[h!]
    \centering
    \includegraphics[width=\textwidth]{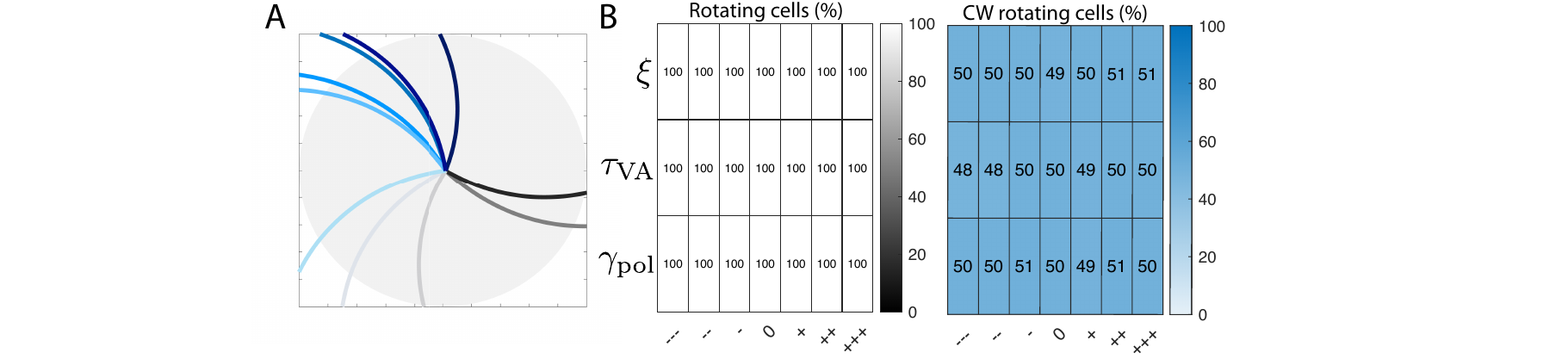}
    \caption{(A) Sample trajectories in the absence of confinement forces. (B) Heatmaps of the outcomes for the percentage of coherent rotations and CW rotations with parameter variations in the default unbiased singlet model.}
    \label{}
\end{figure*}

\begin{figure*}[h!]
    \centering
    \includegraphics[width=\textwidth]{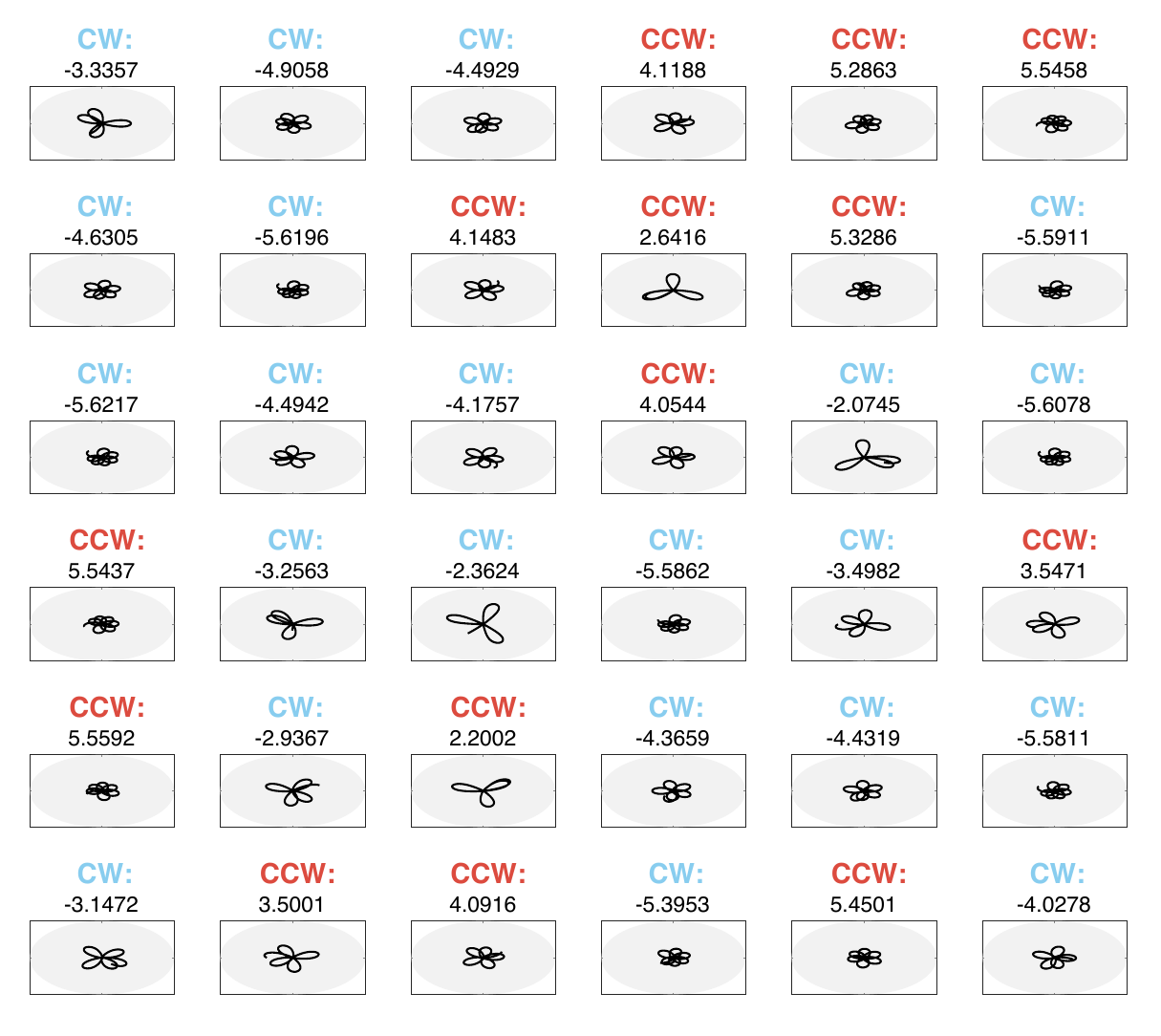}
    \caption{Sample trajectories for simulated unbiased ($\mu=0$) singlets.}
    \label{}
\end{figure*}

\begin{figure*}[h!]
    \centering
    \includegraphics[width=\textwidth]{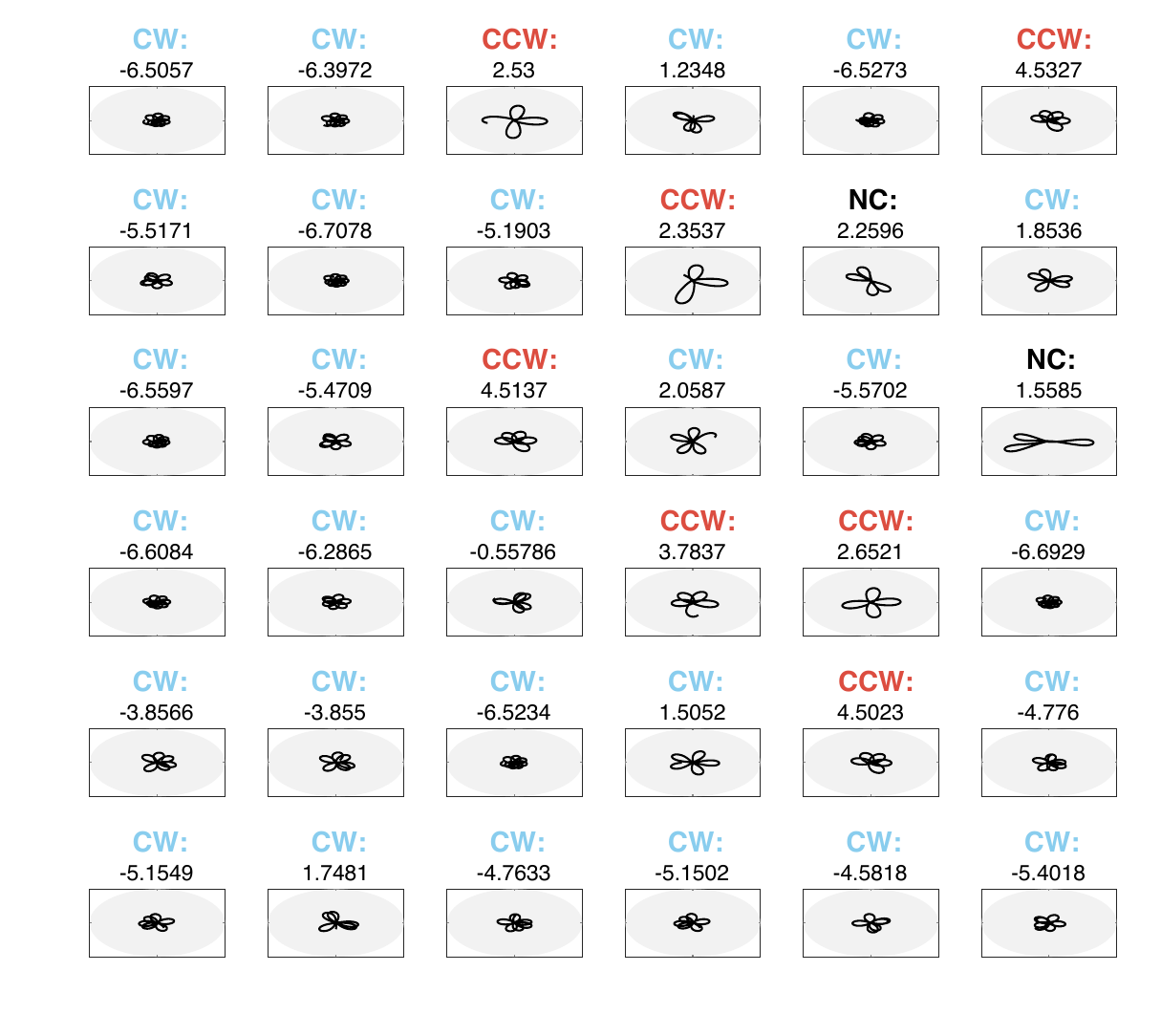}
    \caption{Same as Fig.~S2 but for biased ($\mu=-0.5$) singlets.}
    \label{}
\end{figure*}

\begin{figure*}[h!]
    \centering
    \includegraphics[width=\textwidth]{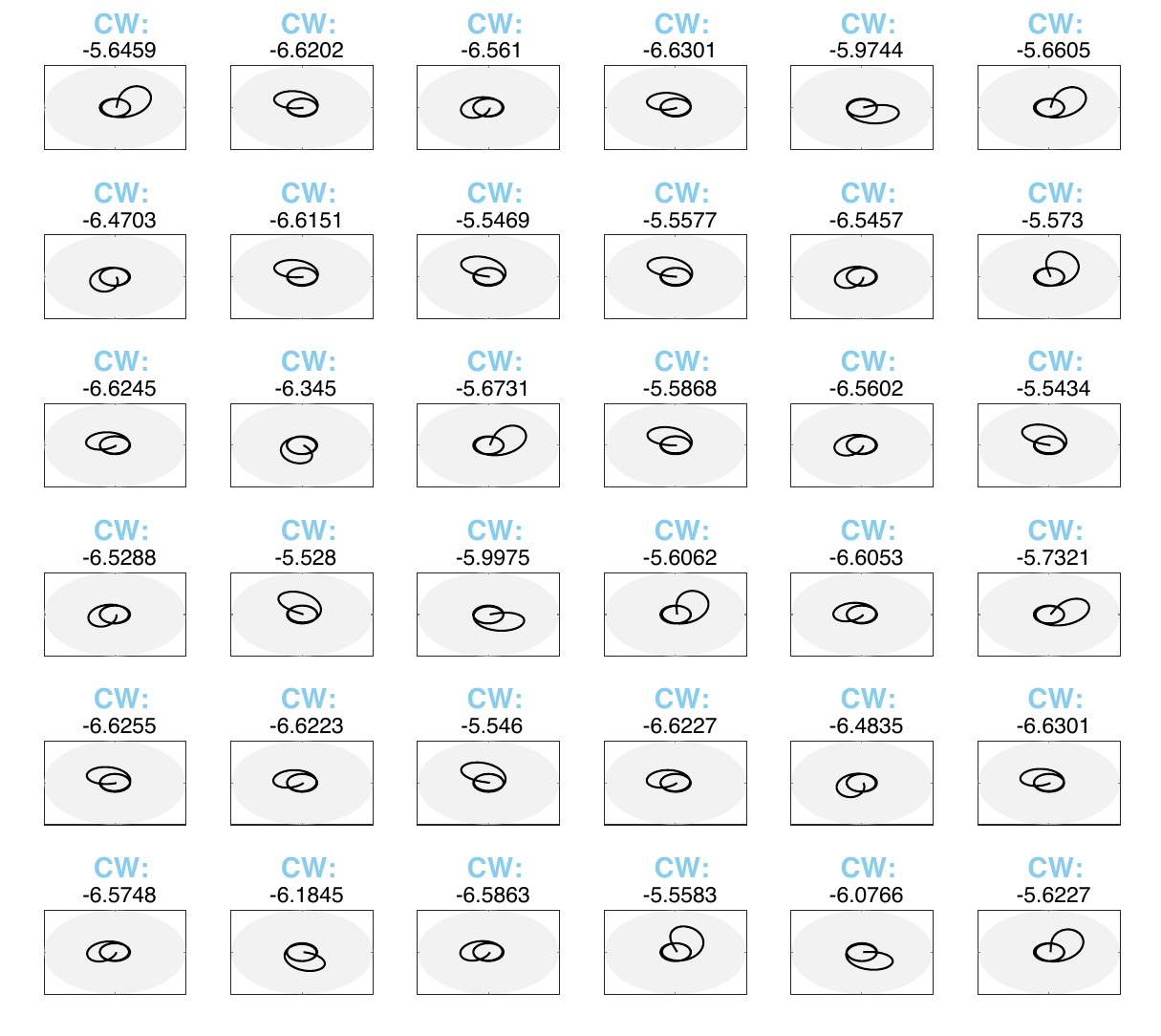}
    \caption{Same as Fig.~S2 (unbiased singlets), but with an easy frictional direction at an offset angle $\delta=\pi/4$ from the polarity axis.}
    \label{}
\end{figure*}

\begin{figure*}[h!]
    \centering
    \includegraphics[width=0.95\linewidth]{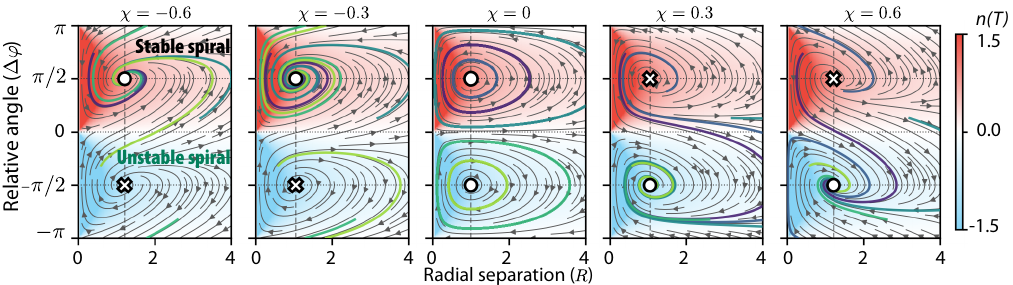}
    \caption{Phase portraits of the singlet model with chiral cell–wall interactions modulated by parameter $\chi$. Depending on the sign of $\chi$, one rotational equilibrium loses stability. White circles denote stable (or neutrally stable) equilibria, white crosses unstable equilibria, the background colormap indicates the time-averaged winding number $n(T)$, and solid colored curves show representative trajectories.}
    \label{fig:chiralWall}
\end{figure*}

\begin{figure}[ht]
    \centering
    \includegraphics[width=\textwidth]{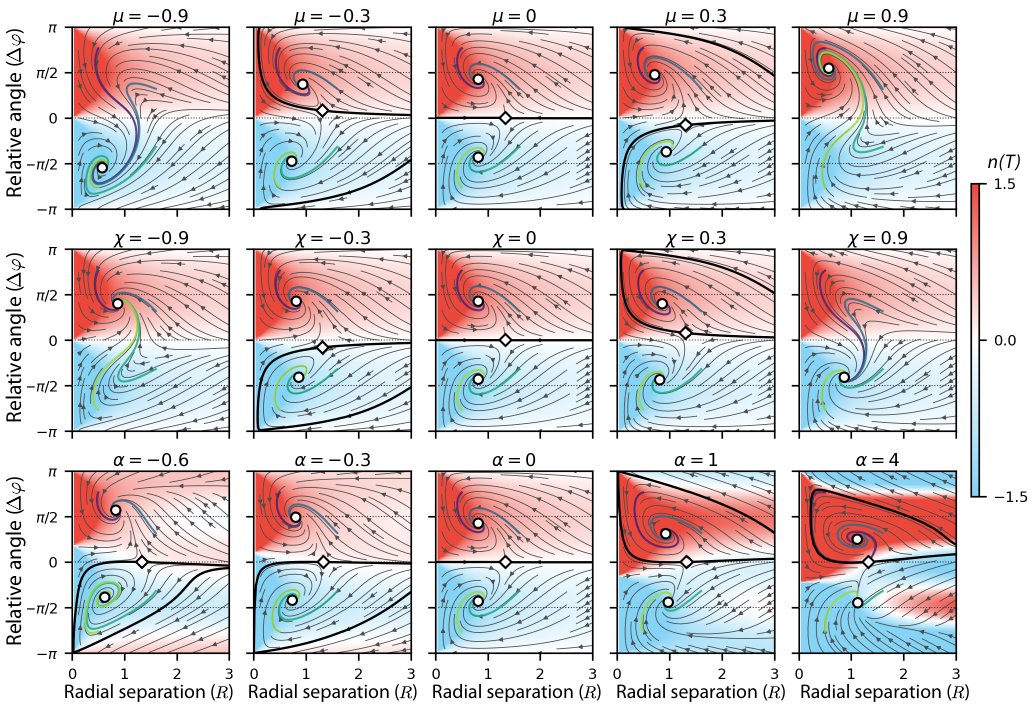}
    \caption{Phase diagrams of the anchored singlet combined with each symmetry-breaking coupling: intrinsic bias $\mu$ (top row), chiral wall $\chi$ (middle row), and anisotropic mobility $(\alpha)$ (bottom row). Parameters are $k_W = 1.0, k_0 = 0.5$ so that $\kappa = 1/2$ and $R_{\rm disc} = 1$. White circles denote stable (or neutrally stable) equilibria, the background colormap indicates the time-averaged winding number $n(T)$, and solid colored curves show representative trajectories.}
    \label{fig:anchor}
\end{figure}

\begin{sidewaysfigure*}[h!]
    \centering
    \includegraphics[width=\textwidth]{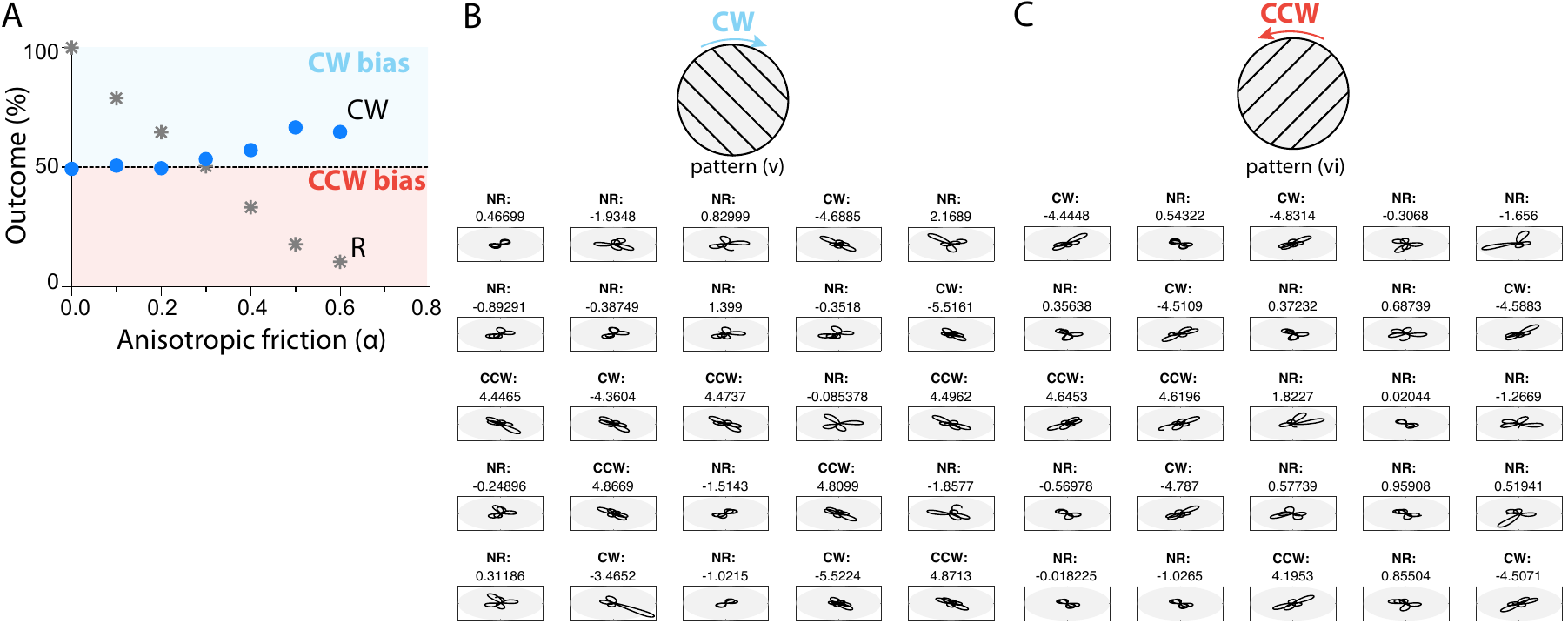}
    \caption{(A) Motility outcomes shown as percentages for varying anisotropic frictional coefficient ($\alpha$) for unbiased cells migrating on pattern (v). Grey stars mark percentage of rotating cells, while blue dots indicate clockwise percentage. (B-C) Sample trajectories for simulated unbiased singlets on (B) pattern (v) and (C) pattern (vi).}
    \label{}
\end{sidewaysfigure*}

\clearpage

\appendix
\renewcommand{\thesection}{Appendix~\Alph{section}}
\renewcommand{\thesubsection}{\Alph{section}.\arabic{subsection}}
\renewcommand{\thetheorem}{\Alph{section}.\arabic{theorem}}

\input{appendix}

%% file: appendix.tex
\section{The general singlet model}
\label{sec:general}

This appendix collects the dynamical systems analysis of the singlet model.
We first state the full model with every mechanism switched on and reduce it to
a closed autonomous flow on the phase lag $(R, \Dphi)$. Each subsequent
section is then a specialization obtained by switching off a subset of parameters, allowing us to focus on one mechanism at a time.

\subsection{Equations of motion}

We consider a single self-propelled point particle confined to a
two-dimensional disk of radius $R_{\rm disc}$. Its lab-frame state is the
position $\vb{x}(t) \in \mathbb{R}^2$, written in polar coordinates as
$\vb{x} = R\lb\cos\theta, \sin\theta\rb^T$, along with a body polarity
angle $\varphi(t)$ and a unit vector $\phat = \lb\cos\varphi, \sin\varphi\rb^T$.

The active propulsion of magnitude $\gamma_\text{pol}$ along $\phat$
drives the cell, and the polarity's inward reorientation at the boundary (the wall-alignment torque, below) is what confines it to the disk; even with no radial force the cell settles onto a bounded orbit (Appendix~B). We allow, in addition, an optional radial force $f(R)\rhat$,
\begin{equation}\label{eq:gen_fR}
    f(R) = -k_0 R + k_W \lb R_{\rm disc} - R \rb,
\end{equation}
a wall spring $k_W \geq 0$ that restores the cell toward the boundary $R_{\rm disc}$ together with a central anchor $k_0$ ($k_0 > 0$ attracting, $k_0 < 0$ repelling). It is switched on only in Appendix~F, where this force cooperates with the polarization reorientation; otherwise $f \equiv 0$. The total mechanical force is 
\begin{equation}\label{eq:gen_F}
    \vb{F} = \gamma_\text{pol}\phat + f(R)\rhat.
\end{equation}
The equations of motion are overdamped force balance for the position and a
three-torque balance for the polarity,
\begin{align}
    \dot{\vb{x}} &= \vb{v} = \mathbb{M}\vb{F}, \label{eq:gen_xdot}\\
    \dot\varphi &= \frac{\abs{\vb{v}}\sin\lb\theta_V - \varphi\rb}{\tau_{\rm VA} V_0}
        + \frac{\sin\lb\theta_W - \varphi + \chi\rb}{\tau_W} + \mu, \label{eq:gen_phidot}
\end{align}
where $\mathbb{M}$ is the mobility tensor (defined below), $\theta_V = \arg\vb{v}$
is the velocity direction, $\theta_W = \theta + \pi$ is the inward radial direction, and 
$V_0 = \gamma_\text{pol}/\xi_\perp$ is the free-moving speed. The polarity
feels three torques: velocity alignment toward the heading $\theta_V$
(with timescale $\tau_{\rm VA}$), wall alignment toward $\theta_W + \chi$
(with timescale $\tau_W$, and an optional body-frame chiral offset $\chi$), and an optional intrinsic torque $\mu$. Notably, we write the velocity-alignment term in the
\emph{signed} form
\begin{equation}\label{eq:gen_perp}
    \abs{\vb{v}}\sin\lb\theta_V - \varphi\rb
        = \lb \phat \times \vb{v}\rb\cdot\vb{z},
\end{equation}
the $\vb{z}$-component of the planar cross product. Unlike the
normalized torque $\sin\lb\theta_V - \varphi\rb$, this is smooth at stagnation
$\vb{v} \to \vb{0}$: a stationary cell feels no alignment torque.

\subsection{Mobility tensor}

We allow the cell's friction with the substrate to be anisotropic as described in Eq.~(8): sliding along an easy axis $\ehat$ (with friction coefficient $\xi_\parallel$) is easier than across it (with friction coefficient $\xi_{\perp}$). 
The easy axis is either fixed in the lab at a constant angle $\beta$,
$\ehat = \lb\cos\beta, \sin\beta\rb^T$, or co-rotating with the polarity at a fixed body-frame offset $\delta$, $\ehat = \lb\cos\lb\varphi+\delta\rb, \sin\lb\varphi+\delta\rb\rb^T$.

\subsection{Reduction to $(R, \Dphi)$}

We define the phase lag
\begin{equation}\label{eq:gen_Dphi}
    \Dphi \equiv \varphi - \theta,
\end{equation}
the angular offset between polarity and position angle. At $(R, \theta)$ the local orthonormal basis is $\rhat = \lb\cos\theta, \sin\theta\rb^T$, $\thetahat = \lb-\sin\theta, \cos\theta\rb^T$, and the polarity has components $\phat^T\rhat = \cos\Dphi$, $\phat^T\thetahat = \sin\Dphi$. The total force in Eq.~\eqref{eq:gen_F} has local components
\begin{equation}\label{eq:gen_Fcomp}
    F_r = \gamma_\text{pol}\cos\Dphi + f(R), 
    \quad
    F_\theta = \gamma_\text{pol}\sin\Dphi,
\end{equation}
and the wall torque simplifies to
$\sin\lb\theta_W - \varphi + \chi\rb = \sin\lb\pi - \Dphi + \chi\rb = \sin\lb\Dphi - \chi\rb$,
the chiral offset surviving the reduction; $\chi = 0$ reduces to the  wall torque $\sin\Dphi$.

Let $\psi$ be the angle between $\rhat$ and $\ehat$, so $\ehat^T\rhat = \cos\psi$
and $\ehat^T\thetahat = \sin\psi$; the co-rotating case has $\psi = \Dphi + \delta$
(a function of the reduced state) while the lab-frame case has
$\psi = \beta - \theta$. Applying $\mathbb{M}$ to $\vb{F}$ and projecting onto the local basis,
\begin{align}
    \xi_\perp v_r &= F_r + \alpha \lb\vb{F}^T\ehat\rb\cos\psi, \label{eq:gen_vr}\\
    \xi_\perp v_\theta &= F_\theta + \alpha \lb\vb{F}^T\ehat\rb\sin\psi, \label{eq:gen_vtheta}
\end{align}
where $v_r = \dot R$, $v_\theta = R\dot\theta$, and
$\vb{F}^T\ehat = \gamma_\text{pol}\cos\lb\psi - \Dphi\rb + f(R)\cos\psi$.

\begin{lemma}[Closure of the reduced system]\label{lem:gen_closure}
Eqs.~\eqref{eq:gen_xdot}--\eqref{eq:gen_phidot}, written on the state
$(R, \Dphi)$, define a closed autonomous flow if and only if
$\psi = \psi(R, \Dphi)$ is a function of the reduced state alone. The
co-rotating case ($\psi = \Dphi + \delta$) closes; the lab-frame case
($\psi = \beta - \theta$) does not, and exists on the full $(R, \Dphi, \theta)$ manifold.
\end{lemma}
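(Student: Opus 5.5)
The plan is to write out the three evolution equations $\dot R$, $\dot\theta$ and $\dot\Dphi = \dot\varphi - \dot\theta$ explicitly in terms of $(R,\Dphi,\psi)$. Closure then reduces to asking whether $\psi$ can be eliminated.

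\textbf{Step 1: the velocity components.} From Eqs.~\eqref{eq:gen_vr}--\eqref{eq:gen_vtheta}, together with $F_r, F_\theta$ from Eq.~\eqref{eq:gen_Fcomp} and $\vb{F}^T\ehat = \gamma_\text{pol}\cos(\psi-\Dphi)+f(R)\cos\psi$, both $\dot R = v_r$ and $\dot\theta = v_\theta/R$ are explicit functions $G_r(R,\Dphi,\psi)$ and $G_\theta(R,\Dphi,\psi)/R$. None of the parameters $\alpha,\gamma_\text{pol},\xi_\perp,k_0,k_W,R_{\rm disc}$ depends on $\theta$, and $f$ depends only on $R$.

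\textbf{Step 2: the velocity-alignment torque.} Using the signed form in Eq.~\eqref{eq:gen_perp}, I would evaluate $(\phat\times\vb{v})\cdot\vb{z}$ in the rotating basis $(\rhat,\thetahat)$. Rotation invariance of the cross product gives
\[
(\phat\times\vb{v})\cdot\vb{z} = \cos\Dphi\, v_\theta - \sin\Dphi\, v_r,
\]
which again depends only on $(R,\Dphi,\psi)$. The wall torque was already reduced to $\sin(\Dphi-\chi)$, and $\mu$ is constant. Hence
\[
\dot\Dphi = \frac{\cos\Dphi\,v_\theta-\sin\Dphi\,v_r}{\tau_{\rm VA}V_0}+\frac{\sin(\Dphi-\chi)}{\tau_W}+\mu-\frac{v_\theta}{R}.
\]
This step shows that all $\theta$-dependence enters only through $\psi$.

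\textbf{Step 3: the two directions.} Sufficiency is immediate: if $\psi=\psi(R,\Dphi)$, substituting gives an autonomous vector field on $(R,\Dphi)$. The co-rotating case $\psi=\Dphi+\delta$ is an instance, with constant $\delta$.

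For necessity I would argue as follows. Suppose $\psi$ is not a function of $(R,\Dphi)$ alone; in the lab-frame case $\psi=\beta-\theta$. Then the right-hand sides depend genuinely on $\theta$, provided $\alpha\neq0$. The dependence comes through $\alpha(\vb{F}^T\ehat)\cos\psi$ in $\dot R$, for example
\[
\alpha\gamma_\text{pol}\cos(\beta-\theta-\Dphi)\cos(\beta-\theta)/\xi_\perp,
\]
which varies with $\theta$ at fixed $(R,\Dphi)$. So two initial states with the same $(R,\Dphi)$ but different $\theta$ have different $\dot R$, and no autonomous flow on $(R,\Dphi)$ can reproduce both. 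The system must therefore be kept on $(R,\Dphi,\theta)$, supplemented by $\dot\theta=v_\theta/R$.

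\textbf{Main obstacle.} The delicate point is making the ``only if'' precise. Strictly it holds only when $\alpha\neq0$; for $\alpha=0$ the flow closes trivially whatever $\psi$ is. I would handle this by stating the nondegeneracy assumption $\alpha\neq0$ and exhibiting the explicit $\theta$-dependent term above. Its $\theta$-derivative is not identically zero, which can be checked at, e.g., $\Dphi=0$, where the term reduces to $\alpha\gamma_\text{pol}\cos^2(\beta-\theta)/\xi_\perp$.
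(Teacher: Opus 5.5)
Your proposal is correct and takes essentially the same route as the paper. The paper states the lemma without a separate proof and establishes it through the surrounding derivation: it projects $\mathbb{M}\vb{F}$ onto $(\rhat,\thetahat)$ so that all $\theta$-dependence enters only through $\psi$ (Eqs.~\eqref{eq:gen_vr}--\eqref{eq:gen_vtheta}), writes the signed alignment torque as $v_\theta\cos\Dphi - v_r\sin\Dphi$, and substitutes $\psi=\Dphi+\delta$ to reach Eq.~\eqref{eq:gen_master}. Your explicit $\theta$-dependence check for $\psi=\beta-\theta$ and your $\alpha\neq0$ caveat make the necessity direction more precise than the paper does. When $f\not\equiv0$, the full $\theta$-dependent part of $\xi_\perp\dot R$ at $\Dphi=0$ is $\alpha(\gamma_\text{pol}+f(R))\cos^2(\beta-\theta)$, so choose $R$ with $f(R)\neq-\gamma_\text{pol}$; also, $\psi$ only matters modulo $\pi$. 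These are refinements rather than gaps.
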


For the co-rotating substrate ($\psi = \Dphi + \delta$, or
$\vb{F}^T\ehat = \gamma_\text{pol}\cos\delta + f(R)\cos\psi$), the velocity
components are
\begin{equation}\label{eq:gen_master_v}
\begin{aligned}
    \xi_\perp v_r &= \gamma_\text{pol}\lb \cos\Dphi + \alpha\cos\delta\cos\psi\rb
                     + f(R)\lb 1 + \alpha\cos^2\psi\rb,\\
    \xi_\perp v_\theta &= \gamma_\text{pol}\lb \sin\Dphi + \alpha\cos\delta\sin\psi\rb
                     + f(R)\alpha\cos\psi\sin\psi,
\end{aligned}
\end{equation}
and, using the signed alignment term in Eq.~\eqref{eq:gen_perp}, the closed \emph{fundamental reduced system} on $(R, \Dphi)$ becomes
\begin{equation}\label{eq:gen_master}
\boxed{
    \begin{aligned}
        \dot R &= v_r,\\
        \dot{\Dphi} &= \frac{\sin\lb\Dphi - \chi\rb}{\tau_W}
            + \frac{v_\theta\cos\Dphi - v_r\sin\Dphi}{\tau_{\rm VA} V_0}
            - \frac{v_\theta}{R} + \mu,
    \end{aligned}}
\end{equation}
where $V_0 = \gamma_\text{pol}/\xi_\perp$ is the free-moving speed.\\

Every section below is a specialization of Eq.~\eqref{eq:gen_master}, each obtained by switching a subset of these five mechanisms on or off: the intrinsic torque $\mu$, the cell-substrate frictional anisotropy $\alpha$, the body-frame offset $\delta$, the chiral confinement offset $\chi$, and the presence of a radial force $f(R)$ (centering anchor $k_0$ and wall $k_W$). A summary of the stability analysis of steady states in each system is provided in Table~S2.

\subsection{Reflection symmetry}

\begin{lemma}[Reflection symmetry of the master reduced system]\label{lem:Z2}
Consider the fundamental reduced system in Eq.~\eqref{eq:gen_master} under the reflection $(R, \Dphi) \mapsto (R, -\Dphi)$. If the chiral couplings vanish ($\mu = 0$, $\chi = 0$, and $\delta = 0$, i.e.\ the easy axis aligned with the polarity), then for arbitrary anisotropy $\alpha > -1$ and radial force $f(R)$ the vector field obeys $(\dot R, \dot\Dphi) \mapsto (\dot R, -\dot\Dphi)$: the flow is invariant and
every trajectory $(R(t), \Dphi(t))$ has a mirror image $(R(t), -\Dphi(t))$. A nonzero $\mu$, $\chi$, or $\delta$ (the last with $\alpha \neq 0$) breaks it.
\end{lemma}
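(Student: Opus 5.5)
The plan is to verify parity directly on the boxed system \eqref{eq:gen_master}, together with the velocity components \eqref{eq:gen_master_v}. We set $\mu=\chi=\delta=0$, so that $\psi=\Dphi$. Under the reflection $\Dphi\mapsto-\Dphi$ we have $\cos\Dphi\mapsto\cos\Dphi$ and $\sin\Dphi\mapsto-\sin\Dphi$, hence $\cos\psi$ is even and $\sin\psi$ is odd.

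Substituting these parities into \eqref{eq:gen_master_v} gives two facts.
\begin{itemize}
\item $\xi_\perp v_r=\gamma_\text{pol}(1+\alpha)\cos\Dphi+f(R)(1+\alpha\cos^2\Dphi)$ is even in $\Dphi$.
\item $\xi_\perp v_\theta=\gamma_\text{pol}(1+\alpha)\sin\Dphi+f(R)\alpha\cos\Dphi\sin\Dphi$ is odd in $\Dphi$.
\end{itemize}
Both hold for every $\alpha>-1$ and every $f$, since $f$ depends only on $R$. Consequently $\dot R=v_r$ is invariant under the reflection.

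For $\dot\Dphi$ we check each term separately.
\begin{itemize}
\item The wall term $\sin\Dphi/\tau_W$ is odd.
\item The alignment numerator $v_\theta\cos\Dphi-v_r\sin\Dphi$ is (odd)(even) minus (even)(odd), which is odd.
\item The term $-v_\theta/R$ is odd.
\end{itemize}
Hence $\dot\Dphi\mapsto-\dot\Dphi$. This shows the vector field $G(R,\Dphi)=(g_1,g_2)$ satisfies $S G(R,\Dphi)=G(S(R,\Dphi))$ with $S=\mathrm{diag}(1,-1)$.

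The trajectory statement then follows from uniqueness of solutions. If $(R(t),\Dphi(t))$ solves the system, then $(R(t),-\Dphi(t))$ has derivative $(g_1(R,\Dphi),-g_2(R,\Dphi))=G(R,-\Dphi)$, so it is also a solution. We apply this on the domain $R>0$, where the field is smooth; this includes the $1/R$ term.

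For the converse clause we exhibit, for each chiral parameter, a piece that is not odd. In each case it suffices to evaluate at one point.
\begin{itemize}
\item A nonzero $\mu$ adds a constant, which is even, to $\dot\Dphi$. So $g_2(R,-\Dphi)+g_2(R,\Dphi)=2\mu\neq0$; in particular this holds at $\Dphi=0$.
\item A nonzero $\chi$ makes $\sin(\Dphi-\chi)/\tau_W$ contain the even piece $-\sin\chi\cos\Dphi/\tau_W$. Evaluating at $\Dphi=0$ with $f\equiv0$, the other terms vanish (since $v_\theta\propto\sin\Dphi$), leaving $g_2(R,0)=-\sin\chi/\tau_W\neq0$ when $\chi\not\equiv0 \pmod \pi$.
\item For $\delta$ we take $f\equiv0$. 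Then $\psi=\Dphi+\delta$, and $v_\theta$ at $\Dphi=0$ equals $\gamma_\text{pol}\alpha\cos\delta\sin\delta/\xi_\perp$, which is nonzero when $\alpha\neq0$ and $\sin2\delta\neq0$. Likewise $\dot R$ is no longer even, because $\cos\psi=\cos(\Dphi+\delta)$ is not even in $\Dphi$.
\end{itemize}

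The main subtlety, rather than obstacle, is stating the converse correctly. The breaking is generic and not literal for every nonzero value: for example, $\chi=\pi$ or $\delta=\pi/2$ may restore the symmetry, or map it to a different one. I would therefore phrase the converse as "breaks it generically", meaning for $\sin\chi\neq0$, $\alpha\sin2\delta\neq0$, and $\mu\neq0$. I would confirm each case by evaluating at $\Dphi=0$, which is the cleanest test point, and keep possible cancellations by $f$ out of the argument by taking $f\equiv0$ in the examples.
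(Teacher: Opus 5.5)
Your proposal is correct and follows the paper's proof: a term-by-term parity check of $v_r$, $v_\theta$ and each piece of $\dot\Dphi$ at $\delta=0$, with the constant $\mu$ and the $-\sin\chi\cos\Dphi/\tau_W$ piece identified as the even terms that break the symmetry.

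Your caveat that breaking actually requires $\sin\chi\neq0$ and $\alpha\sin 2\delta\neq0$ is right, since $\chi=\pi$ or $\delta=\pi/2$ leave the flow symmetric, and it sharpens the paper's ``odd iff $\chi=0$'' wording. One small point in the $\delta$ case: $\dot\Dphi(R,0)=v_\theta(R,0)\bigl(1/(\tau_{\rm VA}V_0)-1/R\bigr)$, so $v_\theta(R,0)\neq0$ alone is not quite the witness, but your observation that $\dot R$ is no longer even already suffices.
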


\begin{proof}
At $\delta = 0$ we have $\psi = \Dphi$, so by Eq.~\eqref{eq:gen_master_v} every term of $v_r$ carries $\cos\Dphi$ or $\cos^2\Dphi$ (even) and every term of $v_\theta$
carries $\sin\Dphi$ or $\cos\Dphi\sin\Dphi$ (odd), for any $\alpha$ and $f(R)$. In $\dot\Dphi$ the wall term $\sin(\Dphi-\chi)/\tau_W$ is odd iff $\chi = 0$; the combination $v_\theta\cos\Dphi - v_r\sin\Dphi$ and the term $-v_\theta/R$ are odd; and the constant $\mu$ is the sole even contribution. Hence $\dot R$ is even and $\dot\Dphi$ is odd precisely when $\mu = \chi = \delta = 0$, which gives the symmetry; any nonzero chiral coupling introduces an even term into $\dot\Dphi$ and breaks it.
\end{proof}

\subsection{Phase space and convergence criterion}

Every reduced flow below exists on the same phase space, and the same three conditions yield global stability in each case. We verify them once here; each specialization then invokes the conclusion instead of repeating the argument. The reduced state $(R, \Dphi)$ is an element in the half-cylinder $\mathcal{M} = \mathbb{R}_{>0}\times S^1$, with $\Dphi \in (-\pi, \pi]$. The disk center is the single orbit $R \to 0$; adjoining it as one point $*$, the map
\begin{equation}\label{eq:gen_Phi}
    \Phi(R, \Dphi) = \lb R\cos\Dphi,\, R\sin\Dphi\rb, \quad \Phi(*) = (0, 0),
\end{equation}
is a homeomorphism $\mathcal{M}\cup\lcb*\rcb \cong \mathbb{R}^2$. Topological notions on $\mathcal{M}\cup\lcb*\rcb$ are therefore inherited from $\mathbb{R}^2$; in particular Heine--Borel holds, so a closed disc $D_{R_{\max}} = \lcb*\rcb \cup \lcb (R, \Dphi) : R \leq R_{\max}\rcb$ is compact, and the planar Poincar\'e--Bendixson and Bendixson--Dulac theorems apply.

\begin{lemma}[Convergence on an invariant region]\label{lem:toolkit}
Let $\Omega \subseteq \mathcal{M}$ be open, simply connected and forward-invariant,
with every forward trajectory bounded. Suppose
\begin{enumerate}
    \item some $\eta \in C^1(\Omega)$ s.t.\ $\nabla\cdot\lb\eta\,\vb{F}\rb > (<) 0$
    on $\Omega$; and
    \item $\Omega$ contains a single equilibrium, hyperbolic and not a saddle.
\end{enumerate}
Then that equilibrium is a sink, and every trajectory in $\Omega$ converges to it.
\end{lemma}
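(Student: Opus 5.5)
The proof combines Bendixson--Dulac with Poincaré--Bendixson, transported to the plane through the homeomorphism $\Phi$ of Eq.~\eqref{eq:gen_Phi}. Since $\Omega\subseteq\mathcal{M}$ is open and simply connected, $\Phi(\Omega)$ is an open simply connected subset of $\mathbb{R}^2$. Pulled back by $\Phi$, the flow is a $C^1$ planar vector field there, and we use the same name $\vb{F}$ for it. I will read the divergence condition on $\mathcal{M}$ in the coordinates $(R,\Dphi)$, with $\eta$ a $C^1$ weight.

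\textbf{Step 1: excluding closed orbits.} Suppose $\Gamma\subset\Omega$ is a periodic orbit. Because $\Omega$ is simply connected, $\Gamma$ bounds a region $D\subset\Omega$; here we use simple connectivity in the $\Phi$-image, where Jordan's theorem applies. The integral $\iint_D \nabla\cdot(\eta\vb{F})\,dR\,d\Dphi$ has a strict sign. Green's theorem equals it to $\oint_\Gamma \eta\,\vb{F}\cdot n\,ds$, which vanishes because $\vb{F}$ is tangent to $\Gamma$. This contradiction rules out periodic orbits.

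The same flux argument also rules out homoclinic loops and heteroclinic cycles (graphics). The boundary of such a cycle consists of trajectories together with equilibria. By hypothesis the only equilibrium in $\Omega$ is not a saddle, so no graphic can form anyway; the flux argument remains valid as a backup.

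One subtlety needs checking: the divergence must be computed with respect to the same area form in which Green's theorem is applied. If the statement uses $(R,\Dphi)$ coordinates, Green's theorem on the cylinder with measure $dR\,d\Dphi$ is legitimate provided $\Gamma$ is contractible in $\Omega$. Simple connectivity of $\Omega\subseteq\mathcal{M}$ ensures this. In particular, $\Gamma$ cannot wrap around the cylinder, since such a curve is not null-homotopic in $\mathcal{M}$.

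\textbf{Step 2: identifying the $\omega$-limit set.} Take $x\in\Omega$. Its forward orbit is bounded and stays in $\Omega$ by forward invariance. The delicate issue, and the one I expect to be the main obstacle, is showing that $\omega(x)$ is a nonempty compact set contained in $\Omega$ itself, not merely in its closure or in $\partial\Omega$, which may contain the point $*$ or the boundary $R\to0$.

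My first approach is to read ``every forward trajectory bounded'' as precompact in $\Omega$, i.e.\ the closure of the forward orbit is a compact subset of $\Omega$, which is the natural interpretation here. Heine--Borel via $\Phi$ then gives a nonempty compact connected $\omega(x)\subset\Omega$. If instead only closure in $\mathcal{M}\cup\{*\}$ is available, I would add the remark that, in the intended applications, $\Omega$ is bounded away from $*$ and from infinity by the invariant region constructed in each specialization.

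Planar Poincaré--Bendixson, valid through $\Phi$, then says $\omega(x)$ is an equilibrium, a periodic orbit, or a union of equilibria joined by connecting orbits. Step 1 excludes the last two options, so $\omega(x)$ is the unique equilibrium $p$, and $x(t)\to p$.

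\textbf{Step 3: $p$ is a sink.} Since $p$ is hyperbolic and not a saddle, it is either a sink or a source. If it were a source, no trajectory other than $p$ itself could have $\omega$-limit $p$. But $\Omega$ is open and contains points other than $p$, and by Step 2 their trajectories converge to $p$, which is a contradiction. Hence $p$ is a sink, and every trajectory in $\Omega$ converges to it.

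Alternatively, the divergence sign gives this directly: $\nabla\cdot(\eta\vb{F})(p)=\eta(p)\,\mathrm{tr}\,J(p)$ because $\vb{F}(p)=0$. This fixes the sign of the trace only when $\eta(p)>0$ and the divergence is negative, so I would use the argument above as the main route.
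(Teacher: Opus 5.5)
Your argument follows the paper's route. Bendixson--Dulac, applied on the disc bounded by a putative cycle (which lies in $\Omega$ by simple connectivity), excludes periodic orbits. Poincar\'e--Bendixson then gives $\omega(x)=\{p\}$, and since a hyperbolic source attracts no other point, $p$ must be a sink. Under your reading that every forward orbit has compact closure inside $\Omega$, so that $\omega(x)\subset\Omega$, the proof is complete. It is also more careful than the paper's proof, which only records that $\omega(x)$ is a compact subset of $\mathcal{M}$ and then tacitly treats the lone equilibrium of $\Omega$ as the only possible limit.

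Do not present that reading as merely the natural interpretation, though. In the paper, boundedness is supplied by an absorbing disc $D_{R_{\max}}$, i.e.\ it means bounded in $R$. With that meaning, the lemma read as a statement about an arbitrary flow on $\mathcal{M}$ is false. Take $\dot s = s(1-s^2)$, $\dot\vartheta = 1$ in planar polar coordinates $(s,\vartheta)$:
\begin{itemize}
\item the open unit disc is open, simply connected and forward invariant, with bounded orbits;
\item $\eta=(1-s^2)^{-2}$ gives $\nabla\cdot(\eta\vb{F})=2\eta>0$;
\item the only equilibrium is a hyperbolic unstable focus;
\item yet every other orbit tends to the boundary circle.
\end{itemize}
Translating the disc away from the origin and pulling it back by $\Phi$, with $\eta$ multiplied by the area factor $R$, places this example inside $\mathcal{M}$.

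So $\omega(x)\cap\partial\Omega=\emptyset$ is a genuinely additional hypothesis, and your fallback does not provide it. Keeping $\Omega$ away from $*$ and from $R=\infty$ still lets $\omega(x)$ sit on the rest of $\partial\Omega$. In the paper's uses, that rest is the pair of axes $\Dphi\in\{0,\pi\}$ bounding $L$ and $\mathcal{S}^{\pm}$, and in Appendix~F these axes even carry the axis saddles. Each application therefore needs its own check that limit sets avoid the axes. In Appendix~E, for instance, the flow crosses the axes strictly into $L$, so no invariant subset of $\overline{L}$ can meet them.
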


\begin{proof}
Each forward orbit is bounded, so its $\omega$-limit set is a nonempty compact subset of $\mathcal{M}$. Hypothesis~(1) excludes periodic orbits (Bendixson--Dulac), and with a single non-saddle equilibrium there are no homoclinic or heteroclinic connections. By Poincar\'e--Bendixson every $\omega$-limit set is the lone equilibrium, which is therefore a sink attracting all of $\Omega$.
\end{proof}

\noindent Most of the symmetry-breaking sections below reduce to checking these hypotheses: a coupling that funnels the flow into a single half-cylinder leaves one global attractor (Appendix~E), whereas one that keeps the axes $\Dphi \in \lcb 0, \pi \rcb$ invariant splits $\mathcal{M}$ into two strips, each with its own sink, and the system is bistable (Appendix~F). The anisotropic friction of Appendix~D is the exception.

\section{Dynamical system analysis of the unbiased singlet}
\label{sec:symmetric}
We start with the simplest case: all five mechanisms of Appendix~A are switched off. The particle is unbiased ($\mu = 0$) on an isotropic substrate ($\alpha = 0$ yielding $\mathbb{M} = I/\xi$ with a single friction coefficient $\xi \equiv \xi_\perp$), with no additional cell-substrate forces ($f(R) \equiv 0$) and an achiral wall interaction ($\chi = 0$); confinement comes from the wall alignment torque alone. The particle position and lab-frame state $\lb\vb{x}, \varphi\rb$ are as defined in Appendix~A, Eqs.~\eqref{eq:gen_xdot}--\eqref{eq:gen_phidot}.

\subsection{Reduced system $(R, \Dphi)$}
Setting $\alpha = f = \chi = \mu = 0$ in Eq.~\eqref{eq:gen_master} makes the velocity parallel to the polarity,
\begin{align}
    v_r &= V_0\cos\Dphi, \label{eq:vr_raw}\\
    v_\theta &= V_0\sin\Dphi, \label{eq:vtheta_raw}
\end{align}
where $V_0 = \gamma_\text{pol}/\xi$ is the free-moving speed, $v_r = \dot R$, and $v_\theta = R\dot\theta$. 
The velocity alignment torque drops out, and by writing $\ell = V_0\tau_W$ for 
the distance covered in one wall alignment time, the closed flow on $(R, \Dphi)$ becomes
\begin{equation}\label{eq:sym_reduced}
    \dot R = V_0\cos\Dphi,
    \quad
    \dot{\Dphi} = \sin\Dphi\lb \frac{1}{\tau_W} - \frac{V_0}{R}\rb.
\end{equation}


\subsection{Conserved Hamiltonian quantity}

\begin{proposition}[Hamiltonian]\label{prop:H}
The system in Eq.~\eqref{eq:sym_reduced} admits the first integral
$H(R, \Dphi) = R\,e^{-R/\ell}\sin\Dphi$.
\end{proposition}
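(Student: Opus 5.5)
The plan is to verify directly that $H$ is constant along solutions of Eq.~\eqref{eq:sym_reduced}, i.e.\ that
\[
\frac{dH}{dt} = \partial_R H\,\dot R + \partial_{\Dphi} H\,\dot\Dphi = 0
\]
on $\mathcal{M}$. I would also show how $H$ is found, since that explains its form.

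\textbf{Finding $H$.} Away from $\cos\Dphi = 0$, divide the two equations of Eq.~\eqref{eq:sym_reduced}:
\[
\frac{d\Dphi}{dR} = \frac{\sin\Dphi}{\cos\Dphi}\,\frac{1}{V_0}\lb\frac{1}{\tau_W} - \frac{V_0}{R}\rb = \tan\Dphi\lb\frac{1}{\ell} - \frac{1}{R}\rb,
\]
using $\ell = V_0\tau_W$. This equation separates as $\cot\Dphi\, d\Dphi = (1/\ell - 1/R)\, dR$. Integrating gives $\ln|\sin\Dphi| = R/\ell - \ln R + C$. Hence $R e^{-R/\ell}\sin\Dphi$ is constant, which motivates the candidate $H$.

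\textbf{Verification.} The partial derivatives are
\[
\partial_R H = e^{-R/\ell}\sin\Dphi\,(1 - R/\ell), \qquad \partial_{\Dphi} H = R e^{-R/\ell}\cos\Dphi.
\]
Substituting, the first term is $V_0 e^{-R/\ell}\sin\Dphi\cos\Dphi\,(1 - R/\ell)$. The second term is $R e^{-R/\ell}\cos\Dphi\sin\Dphi\,(1/\tau_W - V_0/R)$. Since $R/\tau_W = V_0 R/\ell$, the second term equals $e^{-R/\ell}\sin\Dphi\cos\Dphi\,(V_0 R/\ell - V_0)$. This is exactly the negative of the first term, so $dH/dt \equiv 0$ on all of $\mathcal{M}$, including where $\cos\Dphi = 0$. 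The separation step was only heuristic, so its division by $\cos\Dphi$ and by $\sin\Dphi$ causes no gap.

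\textbf{Main obstacle.} The obstacle is mild: one must handle the invariant lines $\Dphi \in \{0, \pi\}$, where $\sin\Dphi = 0$ and the separation fails. There $H \equiv 0$, which is consistent with these lines being invariant. I would also remark that $H$ is nonconstant on every open set, since $\nabla H$ vanishes only at $(\ell, \pm\pi/2)$, the centers. Therefore $H$ is a genuine first integral, and its level sets are the closed orbits.
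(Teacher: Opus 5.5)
Your proof is correct and matches the paper's: both compute $\partial_R H$ and $\partial_{\Dphi} H$, substitute Eq.~\eqref{eq:sym_reduced}, and use $1/\tau_W = V_0/\ell$ to obtain $\dot H \equiv 0$ on all of $\mathcal{M}$. Your separation-of-variables derivation of $H$ and your check that $\nabla H$ vanishes only at $(\ell,\pm\pi/2)$ go beyond the paper, which does not include them.
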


\begin{proof}
The partial derivatives of this Hamiltonian quantity are
\begin{equation}
    \frac{\partial H}{\partial R} = e^{-R/\ell}\lb 1 - \frac{R}{\ell}\rb\sin\Dphi,
\end{equation}
\begin{equation}
    \frac{\partial H}{\partial \Dphi} = R\,e^{-R/\ell}\cos\Dphi.
\end{equation}
Substituting into Eq.~\eqref{eq:sym_reduced} and using $V_0/\ell = 1/\tau_W$, yields
\begin{align*}
    \dot H &= e^{-R/\ell}\lb 1 - \tfrac{R}{\ell}\rb\sin\Dphi \cdot V_0\cos\Dphi \\
    &\quad + R\,e^{-R/\ell}\cos\Dphi \cdot \sin\Dphi\lb \tfrac{1}{\tau_W} - \tfrac{V_0}{R}\rb \\
    &= e^{-R/\ell}\sin\Dphi\cos\Dphi \cdot R\lb \tfrac{1}{\tau_W} - \tfrac{V_0}{\ell}\rb = 0. \qedhere
\end{align*}
\end{proof}
\noindent Trajectories therefore lie on level sets of $H$, which makes the phase portrait (Fig.~2B,E,G) of Eq.~\eqref{eq:sym_reduced} fully transparent. Two properties of $H$ deserve highlighting. First, the prefactor $R\,e^{-R/\ell}$ is strictly positive for $R > 0$, so the sign of $H$ equals the sign of $\sin\Dphi$. The lines $\Dphi = 0$ and
$\Dphi = \pm\pi$ are therefore invariant (level set $H = 0$), and they separate phase space into an upper half-plane $H > 0$ and a lower half-plane $H < 0$. Second, the level sets are bounded in $R$: as $R \to \infty$ the prefactor decays to zero, so each $H \neq 0$ level curve has a maximum $R$ on its orbit.

\subsection{Equilibria}

\begin{proposition}[Equilibria of the unbiased system]\label{prop:sym_eq}
The reduced system in Eq.~\eqref{eq:sym_reduced} has:
\begin{enumerate}
    \item Two interior centers at $(R^*, \Dphi^*) = (\ell, \pm\pi/2)$.
    Each corresponds to steady circular motion of radius $\ell$ at
    angular velocity $\dot\theta = \pm 1/\tau_W$.

    \item Two invariant lines (not equilibria) at $\Dphi = 0$ and $\Dphi = \pm\pi$, 
    forming the $H = 0$ level set, where $\dot\Dphi = 0$ but $\dot R = \pm V_0 \neq 0$.
    The cell drifts radially outward ($\Dphi = 0$) or inward ($\Dphi = \pm\pi$). They are
    not equilibria, but they organize the phase portrait.
    
\end{enumerate}
\end{proposition}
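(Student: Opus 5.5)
The plan is to work directly with the vector field in Eq.~\eqref{eq:sym_reduced} on the half-cylinder $\mathcal{M}$, locate its zeros, classify them by linearization, and then settle the center question with the first integral of Proposition~\ref{prop:H}.

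\textbf{Locating the zeros.} Setting $\dot R = 0$ forces $\cos\Dphi = 0$, i.e.\ $\Dphi = \pm\pi/2$. Then $\sin\Dphi = \pm 1 \neq 0$, so $\dot\Dphi = 0$ forces $1/\tau_W = V_0/R$, i.e.\ $R = V_0\tau_W = \ell$. These are the only zeros on $\mathcal{M}$.

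\textbf{The circular motion.} At $(\ell,\pm\pi/2)$ we have $\dot R = 0$ and $v_\theta = V_0\sin\Dphi = \pm V_0$. Using $v_\theta = R\dot\theta$ from Eq.~\eqref{eq:vtheta_raw}, this gives $\dot\theta = \pm V_0/\ell = \pm 1/\tau_W$, a steady circle of radius $\ell$.

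\textbf{Linearization.} Write $F_1 = V_0\cos\Dphi$ and $F_2 = \sin\Dphi\,(1/\tau_W - V_0/R)$. The Jacobian entries are
$\partial_R F_1 = 0$, $\partial_\Dphi F_1 = -V_0\sin\Dphi$, $\partial_R F_2 = \sin\Dphi\, V_0/R^2$, and $\partial_\Dphi F_2 = \cos\Dphi\,(1/\tau_W - V_0/R)$. At $(\ell,\pm\pi/2)$ the last entry vanishes, so the trace is zero. The determinant is $V_0^2\sin^2\Dphi/\ell^2 = 1/\tau_W^2 > 0$. The eigenvalues are therefore $\pm i/\tau_W$, a linear center.

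\textbf{The center question.} This is the main obstacle, since a linear center need not be a nonlinear center. I resolve it with Proposition~\ref{prop:H}. The first integral $H = R e^{-R/\ell}\sin\Dphi$ has a strict extremum at each point. The factor $R e^{-R/\ell}$ has a strict maximum $\ell/e$ at $R = \ell$, and $\sin\Dphi$ has a strict extremum $\pm 1$ at $\pm\pi/2$. Hence $H$ has a strict local maximum at $(\ell,\pi/2)$ and a strict local minimum at $(\ell,-\pi/2)$. The Hessian there is nondegenerate (diagonal, with entries $\mp e^{-1}/\ell$ and $\mp \ell e^{-1}$). Nearby level sets are therefore closed curves encircling the point, and since there are no other equilibria nearby, orbits on them are periodic. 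This establishes a genuine nonlinear center.

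\textbf{The invariant lines.} On $\Dphi = 0$ or $\Dphi = \pm\pi$ we have $\sin\Dphi = 0$, so $\dot\Dphi = 0$ identically and the lines are invariant. Along them $\dot R = V_0\cos\Dphi = \pm V_0 \neq 0$, so they contain no equilibria: the cell moves outward on $\Dphi = 0$ and inward on $\Dphi = \pm\pi$. Finally, $H = 0$ exactly on these lines, because the prefactor $R e^{-R/\ell}$ is positive for $R > 0$.
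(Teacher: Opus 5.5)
Your proof is correct and follows the same route as the paper: zeros from $\cos\Dphi=0$ and then $R=\ell$, a Jacobian with zero trace and determinant $1/\tau_W^2$, $\dot\theta=\pm V_0/\ell=\pm 1/\tau_W$, and invariance of the axes because $\sin\Dphi=0$ kills $\dot\Dphi$ while $\dot R=\pm V_0\neq 0$. One step goes beyond the paper: it concludes ``center'' directly from $\lambda^2=-1/\tau_W^2$, whereas you close the linear-versus-nonlinear gap by showing $H=Re^{-R/\ell}\sin\Dphi$ has a nondegenerate strict extremum at each point (your Hessian entries $\mp e^{-1}/\ell$ and $\mp \ell e^{-1}$ are right). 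You also verify the $H=0$ claim on the axes, which the paper's proof leaves to the surrounding text.
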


\begin{proof}
From $\dot R = 0$ either $\cos\Dphi = 0$ (giving $\Dphi = \pm\pi/2$),
or the trajectory has unbounded $R$. In the first case,
$\dot{\Dphi} = 0$ requires $\sin\Dphi = 0$ or $R = \ell$; the
former contradicts $\Dphi = \pm\pi/2$, leaving $R^* = \ell$. The
Jacobian at $(\ell, \pm\pi/2)$ is
\begin{equation}
    J = \begin{pmatrix} 0 & \mp V_0 \\ \pm \frac{1}{\ell\tau_W} & 0 \end{pmatrix},
\end{equation}
giving $\lambda^2 = -1/\tau_W^2$ and frequency $\omega = 1/\tau_W$, yielding a center. At $\Dphi = \pm\pi/2$, Eq.~\eqref{eq:vtheta_raw} can be re-written as $\dot\theta = \pm V_0/R^* = \pm 1/\tau_W$. The lines $\Dphi = 0$ and $\Dphi = \pm\pi$ are invariant under Eq.~\eqref{eq:sym_reduced}, since $\sin\Dphi = 0$ means $\dot{\Dphi} = 0$. On these lines $\dot R = \pm V_0$, meaning that trajectories travel radially outward ($\Dphi = 0$) or inward toward the origin ($\Dphi = \pm\pi$). They are not orbits of finite-energy centers, but they organize the phase portrait: every level set $H = c$ with small $|c|$ closely follows them at large $R$.
\end{proof}


\subsection{Classification of trajectories}

The conserved quantity $H$ classifies trajectories. Using
$R\dot\theta = V_0\sin\Dphi$ from Eq.~\eqref{eq:vtheta_raw}, the
angular velocity of the particle along any trajectory is
\begin{equation}\label{eq:omega_H}
    \omega = \dot\theta = \frac{V_0\sin\Dphi}{R}
    = \frac{V_0 e^{R/\ell}}{R^2}\,H,
\end{equation}
so the sign of $\omega$ is the sign of $H$ for all $R > 0$.

\begin{corollary}[Rotation direction classifier]\label{cor:H_classifier}
Along any trajectory of Eq.~\eqref{eq:sym_reduced} with initial condition
$(R_0, \Dphi_0)$:
\begin{enumerate}
    \item If $H(R_0, \Dphi_0) > 0$, then $\omega(t) > 0$ for all $t$,
    and $n(T) > 0$ for all $T > 0$.
    \item If $H(R_0, \Dphi_0) < 0$, then $\omega(t) < 0$ for all $t$,
    and $n(T) < 0$ for all $T > 0$.
    \item If $H(R_0, \Dphi_0) = 0$, then $\omega = 0$: the
    trajectory lies on $\Dphi \in \{0, \pm\pi\}$ and $n(T) = 0$.
\end{enumerate}
No trajectory mix CW and CCW oriented migration paths.
\end{corollary}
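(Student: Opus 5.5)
The plan is to combine conservation of $H$ from Proposition~\ref{prop:H} with the identity in Eq.~\eqref{eq:omega_H}, $\omega = V_0 e^{R/\ell} H / R^2$. Since $H$ is constant along trajectories, $H(R(t),\Dphi(t)) = H(R_0,\Dphi_0) =: c$ for every $t$ in the maximal interval of existence. The prefactor $V_0 e^{R/\ell}/R^2$ is strictly positive for $R>0$, so $\sgn\,\omega(t) = \sgn c$ for all such $t$.

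I would first settle global existence and positivity of $R$. For $c \neq 0$, the level set $\lcb H = c\rcb$ stays away from $R=0$, since $|H| \le R$ forces $R(t) \ge |c|$. It is also bounded above, because $R e^{-R/\ell}\to 0$ as $R \to \infty$. The trajectory therefore remains in the compact region $|c| \le R \le R_{\max}(c)$, where the vector field of Eq.~\eqref{eq:sym_reduced} is smooth, so the solution exists for all $t \ge 0$.

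Cases 1 and 2 then follow at once. Since $\omega(t)$ has the fixed sign of $c$ and is continuous, $n(T) = \frac{1}{2\pi}\int_0^T \omega\,dt$ is strictly positive, respectively strictly negative, for every $T>0$.

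For case 3, $c=0$ together with $R e^{-R/\ell}>0$ gives $\sin\Dphi_0=0$, so the trajectory lies on the invariant lines $\Dphi \in \lcb 0,\pm\pi\rcb$ of Proposition~\ref{prop:sym_eq}. There $\omega = V_0 \sin\Dphi / R = 0$, hence $\theta$ is constant and $n(T)=0$.

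The one delicate point is the inward line $\Dphi=\pm\pi$, where $\dot R = -V_0$ and $R$ reaches $0$ in finite time. I would restrict the claim to the interval on which $R>0$, as in the definition of $n(T)$ in Eq.~\eqref{eq:Theta_int}. The closing statement, that no trajectory mixes CW and CCW migration, is then just the sign constancy of $\omega$ shown above. The main obstacle is this treatment of the boundary $R\to 0$ in case 3; the other cases are immediate.
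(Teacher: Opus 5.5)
Your proof is correct and uses the same argument as the paper: conservation of $H$, together with the identity $\omega = V_0 e^{R/\ell}H/R^2$, fixes the sign of $\omega(t)$ along the whole trajectory, and $n(T)$ inherits that sign by integration. The extra steps you add are details the paper's proof leaves implicit, and you handle them correctly: global existence for $c \neq 0$ from the boundedness of the level set in $R$, and restricting case 3 on $\Dphi = \pm\pi$ to the interval before $R$ reaches $0$, as the definition of $n(T)$ requires.
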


\begin{proof}
$H$ is conserved along trajectories, and Eq.~\eqref{eq:omega_H} gives
$\mathrm{sgn}(\omega) = \mathrm{sgn}(H)$. The sign of $\omega(t)$ is
therefore constant, so $\Theta(T) = \int_0^T \omega\,dt$ inherits the
same sign, and hence so does $n(T) = \Theta(T)/(2\pi)$.
\end{proof}

\section{Effect of intrinsic torque}\label{sec:bias}

The simplest $\mathbb{Z}_2$-breaking term is a constant intrinsic
torque $\mu \neq 0$ in the polarity equation. It enters $\dot\varphi$ as a constant, so it adds $\mu$ to $\dot{\Dphi}$ in the reduced equations
\begin{equation}\label{eq:bias_reduced}
    \dot R = \frac{\gamma_\text{pol}}{\xi}\cos\Dphi,
    \quad
    \dot{\Dphi} = \sin\Dphi\lb \frac{1}{\tau_W} - \frac{V_0}{R}\rb + \mu.
\end{equation}

\begin{proposition}[Equilibria of the biased system]\label{prop:bias_eq}
For $|\mu| < 1$, the system above has two interior
equilibria:
\begin{equation}
\begin{gathered}\label{eq:int_eq_mu}
(R^*_+, \Dphi^*_+) = \lb \frac{V_0}{1/\tau_W + \mu}, +\frac{\pi}{2}\rb,\\
(R^*_-, \Dphi^*_-) = \lb \frac{V_0}{1/\tau_W - \mu}, -\frac{\pi}{2}\rb.
\end{gathered}
\end{equation}
The first exists for $\mu > -1/\tau_W$ and the second for $\mu < +1/\tau_W$. Both are centers, with linearization frequencies $\omega_+ = |1/\tau_W + \mu|$ and $\omega_- = |1/\tau_W - \mu|$.
\end{proposition}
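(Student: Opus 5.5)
The plan is to locate the equilibria directly from Eq.~\eqref{eq:bias_reduced}, linearize there, and then upgrade the linear centers to genuine nonlinear centers using a reversing symmetry. Throughout I read the hypothesis ``$|\mu|<1$'' as $|\mu|<1/\tau_W$, which is what the stated existence conditions require.

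\textbf{Equilibria.} As in Proposition~\ref{prop:sym_eq}, $\dot R=0$ with $R>0$ forces $\cos\Dphi=0$, so $\Dphi^*=\pm\pi/2$ and $\sin\Dphi^*=\pm1$. Setting $\dot\Dphi=0$ then gives
\[
\pm\lb \tfrac{1}{\tau_W}-\tfrac{V_0}{R}\rb+\mu=0,
\]
that is, $V_0/R=1/\tau_W\pm\mu$. This yields the two radii in Eq.~\eqref{eq:int_eq_mu}. They are admissible ($R^*>0$) exactly when $1/\tau_W+\mu>0$, respectively $1/\tau_W-\mu>0$. This gives the two existence conditions; both hold when $|\mu|<1/\tau_W$.

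\textbf{Linearization.} At either equilibrium we have $\partial_R\dot R=0$ and $\partial_\Dphi\dot\Dphi=\cos\Dphi^*(\cdots)=0$, so the Jacobian is off-diagonal with zero trace. Its entries are
\[
\partial_\Dphi\dot R=\mp V_0,\qquad \partial_R\dot\Dphi=\pm V_0/R^{*2}.
\]
Hence $\det J=V_0^2/R^{*2}=(1/\tau_W\pm\mu)^2>0$. The eigenvalues are therefore $\pm i\,|1/\tau_W\pm\mu|$, which gives the stated frequencies $\omega_\pm$. At this point the equilibria are only linear centers.

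\textbf{Nonlinear centers (the main step).} A linear center need not persist under the nonlinear terms, so I would not stop at the Jacobian. Instead I would use the reversing symmetry
\[
\mathcal R:(R,\Dphi,t)\mapsto(R,\pi-\Dphi,-t).
\]
Under $\Dphi\mapsto\pi-\Dphi$ we have $\cos\Dphi\mapsto-\cos\Dphi$ while $\sin\Dphi$ is unchanged. The sign change in $\cos\Dphi$ exactly compensates the time reversal in $\dot R$. Meanwhile $\dot\Dphi$ picks up two sign flips, one from the reflection and one from time reversal, so it is preserved; the constant $\mu$ is untouched. The system is thus reversible. Its fixed set is the line $\Dphi=\pi/2$, and the equivalent representative $\Dphi=-\pi/2$ on the circle.

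Each equilibrium lies on this fixed set and is a linear center. By the standard reversible-center theorem, every orbit close to it crosses the fixed line twice, and by reflection it closes up. The equilibrium is therefore a true center.

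Alternatively, one could exhibit an explicit first integral generalizing Proposition~\ref{prop:H}. Dividing the two equations gives a first-order ODE for $\sin\Dphi$ as a function of $R$ that is linear in $u=\sin\Dphi$, and integrating with factor $Re^{-R/\ell}$ produces such an integral. I expect the main obstacle to be checking that this integral has a strict extremum at each equilibrium; for that reason the reversibility argument is the route I would try first.
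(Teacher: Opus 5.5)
Your proof is correct. The equilibrium and Jacobian steps coincide with the paper's proof. Your condition $V_0/R^*=1/\tau_W\pm\mu$ is the right one; the paper's displayed line has $V_0R^*$ by typo. Reading the hypothesis as $|\mu|<1/\tau_W$ is what the stated existence conditions require.

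You genuinely differ from the paper on the word ``center''. The paper stops at $\lambda^2=-(1/\tau_W\pm\mu)^2$, which by itself only gives a linear center. Its nonlinear claim rests implicitly on the first integral $\widetilde H$, introduced after the global-bifurcation theorem and never checked at the equilibria. Your reversibility step closes this properly. The map $(R,\Dphi,t)\mapsto(R,\pi-\Dphi,-t)$ does preserve the flow. In local coordinates $r=R-R^*_\pm$, $s=\Dphi\mp\pi/2$ it is exactly $s\mapsto -s$, with $\dot r$ odd and $\dot s$ even in $s$. So the standard reversible-center theorem applies at both equilibria. One wording point: the fixed set is two distinct lines $\Dphi=\pm\pi/2$ (since $\pi-(-\pi/2)\equiv-\pi/2$), not two representatives of one line.

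Each route buys something different. Reversibility is local and needs no integration. The first integral gives more: its level sets organize the whole phase portrait and underlie the paper's later sign-of-$\omega$ classification. The obstacle you anticipated there is mild. Written dimensionally, $\widetilde H=e^{-R/\ell}[R\sin\Dphi+\mu\tau_W(R+\ell)]$. At the equilibria one has $\partial_{R\Dphi}\widetilde H=0$, $\partial_{\Dphi\Dphi}\widetilde H=-e^{-R/\ell}R\sin\Dphi$ and $\partial_{RR}\widetilde H=-e^{-R/\ell}(\sin\Dphi+\mu\tau_W)/\ell$. Hence $(R^*_+,\pi/2)$ is a strict maximum and $(R^*_-,-\pi/2)$ a strict minimum, exactly under the stated existence conditions.
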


\begin{proof}
Setting $\dot R = 0$ requires $\cos\Dphi = 0$, hence $\Dphi = \pm\pi/2$. At $\Dphi = +\pi/2$, $\dot{\Dphi} = 0$ becomes
$1/\tau_W - V_0 R^* + \mu = 0$, giving $R^*_+ = V_0/(1/\tau_W+\mu)$. At $\Dphi = -\pi/2$, the equation is $-(1/\tau_W - V_0 R^*) + \mu = 0$, giving $R^*_- = V_0/(1/\tau_W-\mu)$. The Jacobian at
$(R^*_+, \pi/2)$ is
\begin{equation}
    J_+ = \begin{pmatrix} 0 & -V_0 \\ V_0/(R^*_+)^2 & 0 \end{pmatrix}
        = \begin{pmatrix} 0 & -V_0 \\ (1/\tau_W+\mu)^2/V_0 & 0 \end{pmatrix},
\end{equation}
yielding $\lambda^2 = -(1/\tau_W+\mu)^2$ and $\omega_+ = |1/\tau_W+\mu|$. The calculation for $(R^*_-, -\pi/2)$ is analogous.
\end{proof}

The presence of the intrinsic torque also shifts the lines $\Dphi \in \{0, \pm\pi\}$. The lines are no longer invariant: with $\sin\Dphi = 0$, the $\Dphi$ equation gives $\dot{\Dphi} = \mu \neq 0$, so trajectories cross these lines at constant rate $\mu$. The partition of phase space into upper and lower half-planes is therefore broken.

\subsection{Global bifurcation structure}

%
%
\begin{theorem}[Global bifurcation]\label{thm:bif}
As $\mu$ varies, the centers in Eq.~\eqref{eq:int_eq_mu} undergo:
\begin{enumerate}
    \item At $\mu \to -1/\tau_W^+$: the CCW center radius $R^*_+ = V_0/(1/\tau_W+\mu)$ diverges, and the equilibrium escapes to infinity. For $\mu \leq -1/\tau_W$ no CCW equilibrium remains.
    
    \item At $\mu = 0$: the symmetric configuration of Appendix B is recovered, with mirror-paired centers at $(\tau_W V_0, \pm\pi/2)$.
    
    \item At $\mu \to +1/\tau_W^-$: the CW center radius $R^*_- = V_0/(1/\tau_W-\mu)$ diverges, and the equilibrium escapes to infinity. For $\mu \geq +1/\tau_W$ no CW equilibrium remains.
\end{enumerate}
For $|\mu| > 1/\tau_W$ only one rotation direction survives in the
finite-$R$ phase plane. The dynamics is deterministically chiral.
\end{theorem}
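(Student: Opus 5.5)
The argument rests on the explicit equilibria of Proposition~\ref{prop:bias_eq} together with the observation, already used there, that every equilibrium of Eq.~\eqref{eq:bias_reduced} must satisfy $\cos\Dphi = 0$.

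\textbf{Step 1: locate all equilibria.} Setting $\dot R = V_0\cos\Dphi = 0$ forces $\Dphi = \pm\pi/2$. Substituting into $\dot\Dphi = 0$ gives two linear equations in $1/R$:
\[
\tfrac{1}{\tau_W} - \tfrac{V_0}{R} + \mu = 0
\quad\text{and}\quad
-\tfrac{1}{\tau_W} + \tfrac{V_0}{R} + \mu = 0 .
\]
(I will silently correct the typo $V_0 R^*$ in the earlier proof.) Their solutions are
\[
V_0/R^*_+ = 1/\tau_W + \mu, \qquad V_0/R^*_- = 1/\tau_W - \mu .
\]
Since $R>0$, a positive solution $R^*_+$ exists if and only if $1/\tau_W + \mu > 0$, and $R^*_-$ exists if and only if $1/\tau_W - \mu > 0$. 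This already gives the nonexistence claims in items 1 and 3, including the boundary cases $\mu = \mp 1/\tau_W$: there the right-hand side is zero, so there is no finite root.

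\textbf{Step 2: the limits and the symmetric case.} Item 1 follows from monotonicity. As $\mu \downarrow -1/\tau_W$, the denominator $1/\tau_W + \mu$ tends to $0^+$, so $R^*_+ \to +\infty$. Item 3 is identical with $\mu \uparrow 1/\tau_W$. Item 2 is the substitution $\mu = 0$, which gives $R^*_\pm = V_0\tau_W = \ell$; this matches Proposition~\ref{prop:sym_eq}. I will also note that the surviving center stays at finite radius with nonzero frequency $\omega_\mp$ throughout, so it persists as a center (by Proposition~\ref{prop:bias_eq}).

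\textbf{Step 3: the final claim.} For $\mu > 1/\tau_W$, the flow satisfies $\dot\Dphi \ge \mu - |1/\tau_W - V_0/R|$ wherever $\sin\Dphi$ has the appropriate sign. On the lines $\Dphi \in \{0,\pm\pi\}$ every trajectory crosses upward at rate $\mu$. The only remaining equilibrium is at $\Dphi = +\pi/2$, i.e.\ it is CCW, with $\dot\theta = V_0\sin\Dphi/R$. The case $\mu < -1/\tau_W$ is symmetric.

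\textbf{Main obstacle.} Making "deterministically chiral" rigorous would require showing that every orbit eventually winds in one direction. The Hamiltonian structure is lost here, and orbits near $R \to 0$ are delicate. I would restrict the theorem's claim to what the equilibrium count gives, namely that only one rotational steady state exists. For the winding statement, my first attempt would be a Lyapunov-type argument: on $\{\sin\Dphi = 0\}$ the field points strictly one way, so $\Dphi$ drifts monotonically across the invariant-line positions.
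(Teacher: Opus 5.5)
Your argument for items 1--3 is correct and is essentially the paper's proof, which just observes that $R^*_\pm=V_0/(1/\tau_W\pm\mu)$ diverge at $\mu=\mp1/\tau_W$; your use of $R>0$ to exclude equilibria for $|\mu|\ge1/\tau_W$ makes explicit what the paper leaves implicit, and like the paper you support the closing sentence only at the level of the equilibrium count. Keep that restriction, because your Lyapunov idea cannot upgrade it: the one-way crossing of $\{\sin\Dphi=0\}$ is what lets $\Dphi$ circulate, and for $\mu>1/\tau_W$ every orbit at sufficiently large $R$ circulates ($\dot{\Dphi}\ge\mu-1/\tau_W>0$ there) and lingers where $\sin\Dphi<0$, so its winding per cycle is to leading order a positive multiple of $\oint\sin\Dphi\,d\Dphi/(\sin\Dphi/\tau_W+\mu)=2\pi\tau_W\bigl(1-\mu\tau_W/\sqrt{\mu^2\tau_W^2-1}\bigr)<0$, i.e.\ net clockwise, consistent with the paper's own remark at the end of Appendix~C that $\langle\omega\rangle<0$ in the intermediate band.
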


\begin{proof}
The radii in Eq.~\eqref{eq:int_eq_mu} diverge precisely at $\mu = -1/\tau_W$ and $\mu = +1/\tau_W$. The frequencies $\omega_\pm$ vanish there, since the centers spiral outward arbitrarily slowly as they escape to infinity.
\end{proof}

\noindent In this case we can define a new Hamiltonian quantity: 
$\widetilde{H}(R, \Dphi) := e^{-R / V_0 \tau_W}\lsb R \sin\Dphi + \mu\lb R + \gamma_\text{pol} \rb\rsb.$
A similar analysis as in Appendix B can be followed to show that $\dot{\widetilde{H}} = 0$ along solution trajectories.\\

\noindent Next, we want to compute the sign of the instantaneous angular velocity $\omega$.
On a level set $\widetilde{H} = C$ we have
\begin{equation}
    \sin\Dphi = \frac{e^{R/\gamma_\text{pol}}}{R}\lsb C + \mu e^{-R/\gamma_\text{pol}}\lb R+ \gamma_\text{pol} \rb \rsb
\end{equation}
Define $g(R) = e^{-R/\gamma_\text{pol}}\lb R+ \gamma_\text{pol} \rb$ then $\sin\Dphi = \frac{e^{R/\gamma_\text{pol}}}{R}\lsb C - g(R) \rsb$. Next, we observe that $\max g = \mu \gamma_\text{pol}$ and $\min g = 0$. Three distinguishable regimes are defined:
\begin{itemize}
    \item If $C > \mu \gamma_\text{pol}$, $\sgn\sin\Dphi > 0$ and $\omega > 0$;
    \item If $0 < C < \mu\gamma_\text{pol}$, $\omega$ takes both sign along a trajectory; 
    \item If $C < 0$ $\sgn\sin\Dphi < 0$, and $\omega < 0$. 
\end{itemize}
In the intermediate band ($0<C<\mu\ell$), the sign of $\dot\Dphi$ is fixed, so $\Dphi$ advances through $2\pi$ in one period and the net
rotation is its time-averaged angular velocity,
\begin{equation}\label{eq:bias_omega}
    \langle\omega\rangle = \frac{\Theta(C)}{T(C)},
    \quad
    T(C) = \oint \frac{d\Dphi}{\dot\Dphi},
    \quad
    \Theta(C) = \oint \omega\,\frac{d\Dphi}{\dot\Dphi},
\end{equation}
with $\omega = \ell\sin\Dphi/R$, $\dot\Dphi = \sin\Dphi\lb 1-\ell/R\rb + \mu$, and
$R=R(\Dphi;C)$ from inverting $\widetilde H = C$; the winding grows linearly, $n(T)\to (T/2\pi)\langle\omega\rangle$. We evaluate these integrals by quadrature. In this band the cell can rotate against its own bias: for $0<C<\mu\ell$ with $\mu>0$ one finds $\langle\omega\rangle<0$ (net clockwise though $\mu>0$ favours counter-clockwise), which reshapes the basins in Fig.~2G.

\section{Generalization to any mobility matrix}
\label{sec:aniso}

We now turn on the anisotropic mobility, an easy axis frictional anisotropy $\alpha \neq 0$ carried at a fixed co-rotating offset $\delta \neq 0$, while leaving the remaining couplings off ($f = \chi = \mu = 0$). This is the co-rotating anisotropic specialization of the fundamental reduced system in Eq.~\eqref{eq:gen_master} (mobility tensor Eq.~(8), closure $\psi = \Dphi + \delta$ of Lemma Eq.~\ref{lem:gen_closure}). The offset $\delta \neq 0$ breaks the reflection symmetry of Appendix~B and destroys the conserved quantity: the flow becomes dissipative, with two hyperbolic equilibria, one stable and one unstable, and hence a single global attractor.\\

\noindent Altering the fundamental system in Eq.~\eqref{eq:gen_master} to this case ($f = \chi = \mu = 0$, $\psi = \Dphi + \delta$) gives the reduced $(R, \Dphi)$ system
\begin{equation}\label{eq:aniso_singlet}
\begin{aligned}
    \dot{R} &= \ell\lb A \cos\Dphi - B \sin\Dphi\rb,\\
    \dot{\Dphi} &= \sin\Dphi + B - \frac{\ell}{R}\lb A \sin\Dphi + B \cos\Dphi\rb,
\end{aligned}
\end{equation}
on the cylinder $\mathcal{M} = \mathbb{R}_{>0}\times S^1$, where $A = 1 + \alpha\cos^2\delta$, $B = \frac{\alpha}{2}\sin\lb2 \delta\rb$, $\rho = \sqrt{A^2 + B^2}$, and $\ell = V_0 \tau_W$.
We take $\alpha >0$ (so $B > 0$).
The constant $B$ in $\dot{\Dphi}$ is the velocity-alignment torque of the Eq.~(8). For the co-rotating anisotropy it collapses to the constant $B = (\alpha/2)\sin\lb 2 \delta \rb$, so the same mobility coefficient that tilts the velocity reappears as an emergent bias on the polarity. Here time 
is measured in units of $\tau_W$ and we set $\tau_{\rm VA} = \tau_W$, so both alignment rates
are unity. 

\subsection{Equilibria}
Write $\vb{F} = (P, Q)$ for the vector field of Eq.~\eqref{eq:aniso_singlet} on the phase space 
$\mathcal{M}$ and its compactification $\mathcal{M}\cup\lcb*\rcb$ of Appendix~A, 
and split $\mathcal{M}$ into the open half-cylinders $U = \mathbb{R}_{>0}\times(0,\pi)$ 
and $L = \mathbb{R}_{>0}\times(-\pi,0)$, separated by the axes 
$\Gamma_0, \Gamma_\pi$ at $\Dphi = 0, \pi$.

\begin{theorem}\label{thm:aniso}
    For $\alpha > 0$ and $\delta \in (0, \pi/2)$, the anisotropic singlet in Eq.~\eqref{eq:aniso_singlet} has at most two equilibria
    $\vb{x}_{\pm} = (R_{\pm},\Dphi_{\pm})$. 
    They lie where the velocity is purely tangential (perpendicular to the radial 
    direction $\mathbf{\hat{r}}$), with angles and radii
    \begin{equation}
        \Dphi_{\pm} = \mp\frac{\pi}{2} - \phi_0,
        \quad 
        \phi_0 = \arctantwo\lb B, A \rb \in \lb 0, \frac{\pi}{2} \rb, 
        \quad 
        R_{\pm} = \frac{\ell\rho^2}{A\mp B \rho},
    \end{equation}
    and Jacobian invariants
    \begin{equation}
        \det J_{\pm} = \frac{\lb A \mp B \rho\rb^2}{\rho^2} > 0,
        \quad
        \tr J_{\pm} = \mp\frac{\alpha\sin\lb 2\delta\rb}{2\rho}.
    \end{equation}
    Both are hyperbolic of index $+1$ (neither a saddle): $\vb{x}_{+}$ is a sink and
    $\vb{x}_{-}$ is a source. The source radius $R_{-}$ is finite for all parameters; the sink
    radius $R_{+}$ is finite precisely while $B \rho < A$, diverging as $B \rho \to A$. 
    The disc center is a regular point not an equilibrium.
    At $\alpha = 0$, the equilibria degenerate into the neutrally stable centers 
    as in Appendix~B.
    For $\alpha \in (-1, 0)$ the reflection $\Dphi \mapsto -\Dphi$ gives the mirror
    configuration, with the sink and source exchanged.
\end{theorem}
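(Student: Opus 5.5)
The plan is a direct computation organized around the amplitude–phase form of the radial equation. Since $\alpha>0$ and $\delta\in(0,\pi/2)$, we have $A = 1+\alpha\cos^2\delta>0$ and $B=\tfrac{\alpha}{2}\sin(2\delta)>0$. Hence $\phi_0=\arctantwo(B,A)\in(0,\pi/2)$, with $\cos\phi_0 = A/\rho$ and $\sin\phi_0 = B/\rho$. Then $A\cos\Dphi - B\sin\Dphi = \rho\cos(\Dphi+\phi_0)$, so $\dot R = \ell\rho\cos(\Dphi+\phi_0)$. The condition $\dot R=0$ therefore forces $\Dphi+\phi_0 = \pm\pi/2$, which gives exactly two candidate angles on $S^1$. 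This already yields ``at most two equilibria''. It also gives the geometric statement: $v_r=0$ means the velocity is purely tangential.

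To handle both cases at once, I write the two candidates as $\Dphi = -\sigma\pi/2-\phi_0$ with $\sigma=\pm1$, matching the labels $\vb{x}_\pm$. Then $\sin\Dphi = -\sigma A/\rho$ and $\cos\Dphi = -\sigma B/\rho$, so $A\sin\Dphi + B\cos\Dphi = -\sigma\rho$. Substituting into $\dot\Dphi=0$ gives $-\sigma A/\rho + B + \ell\sigma\rho/R = 0$. Solving for $R$ yields $R_\pm = \ell\rho^2/(A\mp B\rho)$. From this formula, $R_-$ is always positive and finite, while $R_+$ is positive and finite exactly when $B\rho<A$ and diverges as $B\rho\to A$. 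I will double-check the sign bookkeeping here, because the assignment of ``sink'' to the equilibrium whose radius can diverge is the claim most sensitive to a sign slip.

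For the Jacobian I differentiate Eq.~\eqref{eq:aniso_singlet} directly, which gives four entries:
\begin{itemize}
\item $\partial_R\dot R = 0$;
\item $\partial_{\Dphi}\dot R = -\ell(A\sin\Dphi+B\cos\Dphi) = \sigma\ell\rho$;
\item $\partial_R\dot\Dphi = (\ell/R^2)(A\sin\Dphi+B\cos\Dphi) = -\sigma\ell\rho/R^2$;
\item $\partial_{\Dphi}\dot\Dphi = \cos\Dphi - (\ell/R)(A\cos\Dphi - B\sin\Dphi)$.
\end{itemize}
The last bracket vanishes at an equilibrium precisely because $\dot R=0$ there, so $\partial_{\Dphi}\dot\Dphi = \cos\Dphi = -\sigma B/\rho$. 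Hence $\tr J_\pm = \mp B/\rho = \mp\alpha\sin(2\delta)/(2\rho)$. The determinant is $\det J_\pm = \ell^2\rho^2/R_\pm^2 = (A\mp B\rho)^2/\rho^2$, which is strictly positive whenever the equilibrium exists at finite $R$. With $\det J>0$ and $\tr J\neq0$ (since $B>0$), each equilibrium is hyperbolic and not a saddle, so its index is $+1$. The sign of the trace then makes $\vb{x}_+$ a sink and $\vb{x}_-$ a source.

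The remaining items are checks.
\begin{itemize}
\item \emph{The disc center is regular.} I would argue in the lab frame rather than in $(R,\Dphi)$, since the polar chart is singular at $*$. With $f=0$ the lab velocity is $\mathbb{M}\gamma_{\rm pol}\phat$, whose magnitude is $V_0\rho\neq0$. So the cell passes through the origin with nonzero speed, and under the compactification $\Phi$ of Appendix~A the point $*$ is a regular point.
\item \emph{The limit $\alpha=0$.} Here $B=0$ and $A=\rho=1$, which gives $\phi_0=0$, $\Dphi_\pm=\mp\pi/2$, $R_\pm=\ell$ and $\tr J=0$. This recovers the centers of Proposition~\ref{prop:sym_eq}.
\item \emph{The case $\alpha\in(-1,0)$.} Now $A>0$ but $B<0$. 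The same formulas hold with the sign of $\tr J$ flipped, which exchanges sink and source. Equivalently, $B\mapsto-B$ corresponds to the reflection $\Dphi\mapsto-\Dphi$, as in the parity argument of Lemma~\ref{lem:Z2}.
\end{itemize}

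I expect the main obstacle to be consistent sign and branch bookkeeping, not any conceptual difficulty. This means matching $\arctantwo$, the $\mp$ labels and the existence condition $B\rho<A$ against one another, together with a careful justification of the regularity of the disc center through the compactified chart.
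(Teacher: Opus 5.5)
Your proposal is correct and follows the paper's route: you solve $\dot R=\dot\Dphi=0$ directly and compute the Jacobian invariants. Your amplitude--phase rewriting $\dot R=\ell\rho\cos(\Dphi+\phi_0)$, together with the observation that $\partial_{\Dphi}\dot\Dphi$ collapses to $\cos\Dphi$ because $\dot R=0$ at an equilibrium, simply makes explicit the ``direct computation'' the paper leaves unstated. For the disc center, the paper instead takes the limit of $(\dot u,\dot v)$ in the chart $\Phi$ and obtains $(\ell A,-\ell B)$; this is your lab-frame velocity written in the co-rotating polarity frame, provided you add one line noting that the frame-rotation contribution is of size $|\dot\varphi|\,R$, which tends to $0$ as $R\to0$ because $\dot\varphi$ stays bounded.
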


\begin{proof}
Take $\alpha > 0$ (the case $\alpha < 0$ is identical with $U$ and $L$ exchanged). 
Setting $\dot R = \dot\Dphi = 0$ in Eq.~\eqref{eq:aniso_singlet} gives the 
two stated equilibria, and a direct computation gives the Jacobian invariants. Since
$\sin(2\delta) > 0$, we have $\tr J_{+} < 0$ and $\tr J_{-} > 0$; with
$\det J_{\pm} > 0$, $\vb{x}_{+}$ is a sink and $\vb{x}_{-}$ a source,
both non-saddles of index $+1$. \\

\noindent The disk center is not a hidden equilibrium. In $(u, v) = \Phi(R, \Dphi)$ coordinates (Eq.~\eqref{eq:gen_Phi}), a direct limit gives $\vb{F}(*) = \lim_{R\to 0}(\dot u, \dot v) = (\ell A, -\ell B)$ independent of the approach direction, with $\abs{\vb{F}(*)} = \ell\rho \neq 0$; at the center, the friction anisotropy tilts the velocity off the polarity by the body-frame drift angle $\phi_0 = \arctantwo(B, A)$, which vanishes only at $B = 0$. So the only equilibria are $\vb{x}_{\pm}$.

\end{proof}

\subsection{Global behavior}

\begin{theorem}[Global behavior]\label{thm:aniso_global}
    Let $\alpha > 0$ and $\delta \in (0, \pi/2)$. The flow~\eqref{eq:aniso_singlet} has no 
    periodic orbit, and with the source $\vb{x}_{-}$ and (for $B \rho < A$) the sink
    $\vb{x}_{+}$ of Theorem~\ref{thm:aniso}, both of index $+1$, its global behavior 
    splits:
    \begin{enumerate}
    
        \item Bounded ($B\rho < A$). The sink $\vb{x}_{+}$ is finite, and every trajectory in
        $\mathcal{M} \setminus \lcb \vb{x}_{-} \rcb$ converges to it; a single global attractor.

        \item Escape ($B \rho \geq A$). The sink has escaped to infinity, $\vb{x}_{-}$ is the 
        only equilibrium, and every trajectory in $\mathcal{M}\setminus\lcb \vb{x}_{-} \rcb$
        has $R(t) \to \infty$ linearly in time, at a strictly positive asymptotic mean 
        radial speed $m_{\infty}$. The polarity precesses ($B\geq 1$) or locks onto 
        $\Dphi_{\infty} = \arcsin(-B)$ ($B < 1$).
        
    \end{enumerate}
   
\end{theorem}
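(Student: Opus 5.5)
The plan is to treat the two regimes separately, after first excluding periodic orbits on the whole phase space $\mathcal{M}$. In both regimes the far-field reduction of Eq.~\eqref{eq:aniso_singlet} as $R\to\infty$ drives the argument. I write the flow in the rotated form
\begin{equation*}
\dot R = \ell\rho\cos\lb\Dphi+\phi_0\rb,\qquad
\dot\Dphi = \sin\Dphi + B - \frac{\ell\rho}{R}\sin\lb\Dphi+\phi_0\rb,
\end{equation*}
using $A=\rho\cos\phi_0$ and $B=\rho\sin\phi_0$. A periodic orbit in $\mathcal{M}$ is of one of two kinds. It is either contractible in $\mathcal{M}\cup\lcb*\rcb\cong\mathbb{R}^2$, or it wraps once around the cylinder, with $\Dphi$ advancing through $2\pi$.

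\textbf{Excluding contractible cycles.} By the index theorem such a cycle must enclose equilibria of total index $+1$. The disc centre $*$ is a regular point (Theorem~\ref{thm:aniso}), so the cycle encloses exactly one of $\vb{x}_\pm$.

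\textbf{Excluding wrapping cycles.} I will use a Dulac function modelled on the first integral $H=R e^{-R/\ell}\sin\Dphi$ of Proposition~\ref{prop:H}. The first attempt is $\eta=R\,e^{-R/(\ell\rho)}$. With this choice,
\begin{equation*}
\nabla\cdot(\eta\vb{F}) = R\,e^{-R/(\ell\rho)}\lsb\cos\Dphi-\cos\lb\Dphi+\phi_0\rb\rsb,
\end{equation*}
which is not sign-definite. I will therefore also allow a $\Dphi$-dependent factor $\eta=R\,e^{-R/(\ell\rho)}h(\Dphi)$. The aim is to choose $h$ periodic so that the divergence becomes $R\,e^{-R/(\ell\rho)}h\,(\text{positive})$ whenever $\phi_0\in(0,\pi/2)$, i.e.\ $B>0$; this sign is what produces $\tr J_+<0$. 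If no such $h$ exists, the fallback is a Lyapunov-type estimate. I would use the Appendix~B quantity $H$, possibly with $\sin\Dphi$ replaced by $\sin(\Dphi+\phi_0)$ and $\ell$ by $\ell\rho$. The goal is to show that $\dot H$ has a fixed sign off the invariant curves, or that $\oint \nabla\cdot\vb{F}\,dt<0$ along any candidate cycle. Either result excludes closed orbits in the strip surrounding $\vb{x}_+$; the same reasoning also handles a cycle around the source $\vb{x}_-$ and the wrapping cycles. Finding this Dulac or Lyapunov function is the step I expect to be the main obstacle.

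\textbf{Bounded case, $B\rho<A$.} I next show that forward orbits are bounded. For large $R$ the flow is $\dot\Dphi\approx\sin\Dphi+B$. If $B<1$, $\Dphi$ settles near the zeros of $\sin\Dphi+B$, and there $\dot R$ has a sign controlled by $A\sqrt{1-B^2}\pm B^2$. The condition $B\rho<A$ should be exactly what makes the attracting far-field branch inward-pointing at large $R$. Given that, I construct a large closed disc $D_{R_{\max}}$ that is forward invariant. I then apply Lemma~\ref{lem:toolkit} on $\Omega=D_{R_{\max}}\setminus\lcb\vb{x}_-\rcb$, or equivalently invoke Poincar\'e--Bendixson directly. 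With no periodic orbits and only two equilibria, both of nonzero index and neither a saddle, there are no homoclinic or heteroclinic loops. Every $\omega$-limit set is therefore $\vb{x}_+$, because $\vb{x}_-$ is a source and attracts nothing except itself.

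\textbf{Escape case, $B\rho\ge A$.} Here $\vb{x}_-$ is the only equilibrium. Poincar\'e--Bendixson together with the absence of cycles then forces $R(t)\to\infty$. To get the rate, I average the radial equation against the limiting angle dynamics $\dot\Dphi=\sin\Dphi+B$, treating the $O(1/R)$ correction perturbatively.
\begin{itemize}
\item For $B<1$, $\Dphi$ locks onto $\Dphi_\infty=\arcsin(-B)$, and
\begin{equation*}
m_\infty=\ell\lb A\sqrt{1-B^2}+B^2\rb>0 .
\end{equation*}
\item For $B\ge1$, $\Dphi$ precesses, and $m_\infty$ is the time average $\oint \dot R\,d\Dphi/(\sin\Dphi+B)\big/\oint d\Dphi/(\sin\Dphi+B)$. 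This is positive since the weight favours $\Dphi$ near $-\pi/2$, where $\dot R=\ell B>0$.
\end{itemize}
The linear growth $R\sim m_\infty t$ then follows from a comparison argument.
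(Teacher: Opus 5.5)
\textbf{Main gap: the exclusion of periodic orbits.} Two of your ingredients are correct: $B\rho<A$ is exactly the condition that the attracting far-field branch points inward, and your $B\ge1$ average evaluates to $\ell B(B-\sqrt{B^2-1})>0$. But the key claim, that there is no periodic orbit, is left unproved, and the Dulac route you set up cannot deliver it.

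The divergence you aim for, $R\,e^{-R/(\ell\rho)}h\cdot(\text{positive})$, has the sign of $h$. So it is one-signed only if $\eta$ is. Yet for any $C^1$ multiplier, $\nabla\cdot(\eta\vb{F})=\eta\,\tr J$ at an equilibrium, and $\tr J_+<0<\tr J_-$. Hence no one-signed $\eta$ can have a one-signed divergence on a region containing both equilibria. A sign-changing $h$ gives an indefinite divergence. The strip-by-strip argument of Appendices~E--F is also unavailable, because neither half-cylinder is forward invariant: on $\Dphi=0$, $\dot{\Dphi}=B(1-\ell/R)$ changes sign at $R=\ell$.

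Your fallback fails as stated too. For $H_1=R\,e^{-R/(\ell\rho)}\sin(\Dphi+\phi_0)$ one gets $\dot H_1=R\,e^{-R/(\ell\rho)}\cos(\Dphi+\phi_0)\lsb\sin\Dphi-\sin(\Dphi+\phi_0)+B\rsb$, which changes sign with $\cos(\Dphi+\phi_0)$. And showing $\oint\nabla\cdot\vb{F}\,dt<0$ on every cycle would only make cycles attracting, not exclude them.

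The missing idea is the paper's. Use the rate $A/(\ell\rho^2)$, i.e.\ $g=R\,e^{-AR/(\ell\rho^2)}$. This removes the $\cos(\Dphi+\phi_0)$ part of your residual and leaves $\nabla\cdot(g\vb{F})=\frac{B}{\ell\rho^2}\,g\,v_\theta$. That is still indefinite, but $g\,v_\theta=-\partial_{\Dphi}(g\,v_r)$ is itself a divergence. Applying the divergence theorem to both representations on the region bounded by a cycle $\Gamma$ gives $0=\oint_\Gamma g\,v_r^2\,dt\neq0$.

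Equivalently, and this is what your fallback was reaching for, the divergence-free field $g\vb{F}+\frac{B}{\ell\rho^2}(0,g\,v_r)$ has stream function $\ell\rho K$, where
\begin{equation*}
K=e^{-AR/(\ell\rho^2)}\lsb R\sin(\Dphi+\phi_0)+\frac{B\rho}{A}\lb R+\frac{\ell\rho^2}{A}\rb\rsb
\end{equation*}
satisfies $\dot K=-\frac{B}{\rho}\,R\,e^{-AR/(\ell\rho^2)}\cos^2(\Dphi+\phi_0)\le0$. This has the form of the conserved $\widetilde H$ of Appendix~C, with $\Dphi$ shifted by $\phi_0$, but is now strictly dissipated. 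Because $K$ is single-valued on the cylinder, it excludes contractible and wrapping cycles at once. Your distinction between these is a fair one; the paper's region argument extends to wrapping cycles because both fluxes through a small circle $R=\epsilon$ vanish. By LaSalle, every orbit other than $\vb{x}_-$ then either converges to $\vb{x}_+$ or has $R(t)\to\infty$.

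\textbf{Further errors.} First, in the escape case with $B<1$ you lock onto the principal value $\arcsin(-B)$, where $\cos\Dphi_\infty=+\sqrt{1-B^2}$. That is the \emph{repelling} rest point of $\dot{\Dphi}=\sin\Dphi+B$. The attracting one has $\cos\Dphi_\infty=-\sqrt{1-B^2}$ and gives $m_\infty=\ell\lb B^2-A\sqrt{1-B^2}\rb$, exactly the quantity whose negativity is equivalent to $B\rho<A$ in your bounded case. Your $\ell\lb A\sqrt{1-B^2}+B^2\rb$ is positive for every $B<1$, so it cannot be the generic rate. Note also that $m_\infty=0$ at $B\rho=A$, so strict positivity needs $B\rho>A$, which is all the paper's proof shows.

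Second, no round disc is forward invariant, since $\dot R=\ell\rho\cos(\Dphi+\phi_0)>0$ on half of the circle $R=R_{\max}$. Worse, along the repelling far-field branch $\dot R\to\ell\lb A\sqrt{1-B^2}+B^2\rb>0$. A Wa\.{z}ewski argument on the strip $\lcb R\ge R_1,\ \abs{\Dphi+\arcsin B}\le\epsilon\rcb$ applies for $\epsilon$ small and $R_1$ large: there is strict exit through both sides and entry through $R=R_1$. It yields an orbit that never leaves the strip and so escapes to infinity even when $B\rho<A$. This orbit is the perturbation of the outward invariant ray $\Dphi=0$ of Appendix~B, and your formula above is its speed.

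Hence, in the bounded regime, convergence to $\vb{x}_+$ holds for every orbit except $\vb{x}_-$ and this exceptional one. The paper's proof asserts the same absorbing disc, so neither argument establishes the literal ``every trajectory'' claim. With $K$ in hand, the corrected statement follows from the dichotomy above, together with the far-field fact that an escaping orbit must approach the repelling branch, since $\dot R<0$ near the attracting one.
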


\begin{proof}
\emph{No periodic orbits.} 
Recall that $v_r = \dot{R} = \ell (A \cos\Dphi - B\sin\Dphi)$ and 
$v_{\theta} = \ell \lb A \sin\Dphi + B\cos\Dphi \rb$, so that 
$\dot{\Dphi} = \sin\Dphi + B - v_{\theta} / R$ and $\partial_{\Dphi} v_r = -v_{\theta}$.
With $g(R) = Re^{-AR/(\ell \rho^2)}$ for which $g' - g/R = -(A/\ell \rho^2) g$, and 
using $\nabla\cdot\vb{F} = \cos\Dphi - v_r / R$, we find
\[
    \nabla\cdot\lb g \vb{F} \rb = g\cos\Dphi + v_r \lb g' - \frac{g}{R} \rb
        = \frac{B}{\ell \rho^2} g(R) v_{\theta}.
\]
Its sign is that of $v_{\theta}$ which is not of one sign. But $g v_{\theta}$ is the
divergence of two vector fields:
\[
    g v_{\theta} = \nabla\cdot\lb \frac{\ell\rho^2}{B} g \vb{F}\rb = \nabla\cdot\lb 0, - gv_r\rb.
\]
Suppose, for contradiction, that the flow has a periodic orbit $\Gamma \subset\mathcal{M}$, 
and let $\Omega$ be the region it bounds. Applying the divergence theorem to each field
over $\Omega$ evaluates $\mathcal{I}$ in two ways:
\[
    0 
    =
    \frac{\ell\rho^2}{B}\oint_{\partial\Omega} g(\vb{F}\cdot\mathbf{n}) \dd s
    =
    \mathcal{I} 
    = 
    \oint_{\partial\Omega} (0, -g v_r) \cdot \mathbf{n} \dd s 
    = 
    \oint_{\Gamma} g v_r^2 \dd t \neq 0.
\]
The left side vanishes because $\mathbf{F}$ is tangent to $\Gamma$. For the right,
the outward normal element is $\mathbf{n} \dd s = (\dd\Dphi, -\dd R)$,
so $(0, -g v_r)\cdot\mathbf{n}\dd s = g v_r \dd R$, then 
$\dd R = v_r \dd t$ on $\Gamma$ and $v_r \neq 0$. This is a contradiction and hence 
no periodic $\Gamma$ can exist, and with two index +1 equilibria and no saddle there are no
homoclinic or heteroclinic connections either, so no $\omega$-limit set is a cycle.

\emph{Bounded regime.} ($B\rho < A$). Here $B \rho < A$ implies $B < 1$. As $R \to \infty$
the orbital term in~\eqref{eq:aniso_singlet} drops and $\dot\Dphi \to \sin\Dphi + B$,
whose stable rest angle has $\sin\Dphi_{\infty} = -B$. There 
$\dot{R} \to \ell( B^2 - A\sqrt{1-B^2})< 0$. Hence beyond some radius the flow turns inward,
a disc $D_{R_{\rm max}}$ is absorbing, and every forward orbit is bounded with non-empty
compact $\omega$-limit set. With no periodic orbit and no saddle connection, 
Poincare-Bendixson leaves $\vb{x}_{+}$ as the only admissible $\omega$-limit set.
Since $\vb{x}_{-}$ is a source every trajectory in $\mathcal{M}\setminus\lcb \vb{x}_{-}\rcb$
converges to $\vb{x}_{+}$.

\emph{Escape regime} ($B\rho \geq A$). Now $R_{+} = \ell\rho^2 / (A - B \rho)$ 
has passed through $+\infty$, leaving the source $\vb{x}_{-}$ as the only equilibrium. 
A source is not any point's $\omega$-limit, and there is no cycle or saddle, so every 
trajectory in $\mathcal{M}\setminus \lcb \vb{x}_{-}\rcb$ is unbounded and $R(t) \to \infty$.

For the rate, as $R \to\infty$ we have $\dot\Dphi \to \sin\Dphi + B$. 
For $B < 1$ the polarity locks at a fixed angle $\Dphi_{\infty}$ ($\sin\Dphi_{\infty}=-B$), 
giving $\dot{R} \to \ell ( B^2 - A \sqrt{1 - B^2}) =: m_{\infty}$. 
For $B \geq 1$ it circulates instead, with $a = 1 - A \ell / R$ and $b = -B \ell / R$, the radius
gained per orbit is
\[
    \Delta R_{\rm frozen}(R) = \oint \frac{\dot{R}}{\dot{\Dphi}} \dd\Delta\phi  = \frac{2\pi \ell B}{a^2 + b^2}
        \lb \frac{B}{\sqrt{B^2 - a^2 - b^2}} - 1 \rb,
\]
which tends as $R \to\infty$ to $m_{\infty}T$, with $T = 2\pi/\sqrt{B^2 - 1}$ and $m_{\infty} = \ell B (B - \sqrt{B^2 - 1})$.
In both cases $m_{\infty} > 0$ when $B \rho > A$, continuous across $B = 1$ ($m_{\infty} = \ell$), 
and $R(t) / t \to m_{\infty}$. In other words, escape is linear in time.
\end{proof}

\section{Chiral confinement}
\label{sec:chiwall}

We now turn on the chiral wall offset $\chi \neq 0$ (a fixed body-frame
rotation of the wall-alignment target) on an isotropic substrate with no anchor ($\alpha = f = \mu = 0$). 
Like the friction offset $\delta$ of Appendix~D, $\chi$ breaks the reflection symmetry of Appendix~B. The offset $\delta$ acts as a state-independent effective torque that \emph{moves} the equilibria; the chiral offset $\chi$ leaves the equilibria where they are and instead \emph{splits their stability}, turning one center into a sink and the other into a source, and so selecting a single global attractor.

Because $\alpha = f = 0$ the velocity is again parallel to the polarity, the velocity-alignment torque drops out, and with $\theta_W = \theta + \pi$ the wall term becomes $\sin\lb\theta_W - \varphi + \chi\rb = \sin\lb\Dphi - \chi\rb$. The fundamental reduced $(R, \Dphi)$ system becomes
\begin{equation}\label{eq:chiwall_reduced}
    \dot R = V_0\cos\Dphi,
    \quad
    \dot{\Dphi} = \frac{\sin\lb\Dphi - \chi\rb}{\tau_W} - \frac{V_0}{R}\sin\Dphi,
\end{equation}
and $\chi = 0$ recovers the symmetric flow in Eq.~\eqref{eq:sym_reduced}. Splitting the chiral torque,
\begin{equation}
    \sin\lb \Dphi - \chi \rb
    = \underbrace{\cos\chi\,\sin\Dphi}_{\text{odd in }\Dphi}
    - \underbrace{\sin\chi\,\cos\Dphi}_{\text{even in }\Dphi},
\end{equation}
shows the effect of $\chi$ on dynamics. The odd piece merely rescales the wall-torque amplitude by $\cos\chi$, but importantly, the even piece breaks $\Dphi \to -\Dphi$. Crucially, this symmetry-breaking piece carries a $\cos\Dphi$ factor, so, unlike the constant
bias $\mu$ of Appendix~C, it \emph{vanishes} at the equilibria
$\Dphi = \pm\pi/2$. It is therefore invisible in the equilibrium positions and appears only in their stability.

\subsection{Equilibria}

Setting $\dot R = 0$ in Eq.~\eqref{eq:chiwall_reduced} forces $\cos\Dphi = 0$, so the interior equilibria stay at $\Dphi = \pm\pi/2$, now at the common radius $R^\star = \frac{\ell}{\cos\chi}$ with $\ell = V_0\tau_W$ and rotation rates $\dot\theta = \pm\cos\chi/\tau_W$. Physically, the chiral confinement interaction puts the cell on a slightly wider orbit ($R^\star > \ell$) that it traces a little more slowly (rate $\propto \cos\chi$) compared to the achiral case, but it does~\emph{not} on its own move the two rotating states or pick a winner, because the symmetry-breaking term is zero there. The selection happens one order up, in the linearization.

\subsection{Global stability}

\begin{theorem}\label{thm:chiwall}
Let $\alpha = f = \mu = 0$ and $\chi \in (0, \pi/2)$. The chiral-wall system in Eq.~\eqref{eq:chiwall_reduced} on $\mathcal{M}$ has exactly two equilibria, both hyperbolic and of index $+1$ (neither a saddle), at the common radius $R^\star = \ell/\cos\chi$:
\begin{equation}
        \vb{x}_{+} = \lb R^\star, -\tfrac{\pi}{2}\rb,
        \quad
        \vb{x}_{-} = \lb R^\star, +\tfrac{\pi}{2}\rb,
\end{equation} with Jacobian invariants
\begin{equation}
        \det J_{\pm} = \lb\frac{\cos\chi}{\tau_W}\rb^2 > 0,
        \quad
        \tr J_{\pm} = \mp \frac{\sin\chi}{\tau_W}.
\end{equation}
Thus, $\vb{x}_{+}$ is a sink and $\vb{x}_{-}$ a source (foci for $\chi < \arctan 2$, nodes for $\arctan 2 < \chi < \tfrac{\pi}{2}$). The flow has no periodic orbit, and every trajectory in $\mathcal{M}\setminus \lcb \vb{x}_{-} \rcb$ converges to $\vb{x}_{+}$: it is the unique global attractor, carrying clockwise rotation $\dot\theta^\star = -\cos\chi/\tau_W$. For $\chi \in \lb -\tfrac{\pi}{2}, 0\rb$, the two equilibria exchange roles. For $\chi = 0$, the equilibria degenerate into the neutrally stable centers of Appendix~B.
\end{theorem}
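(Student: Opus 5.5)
The plan is to verify the local statements by direct linearization of Eq.~\eqref{eq:chiwall_reduced}, and then obtain the global statement from a Poincar\'e--Bendixson argument. Periodic orbits will be ruled out by the same two-field divergence trick used in the proof of Theorem~\ref{thm:aniso_global}.

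\emph{Equilibria and linearization.} First, $\dot R=0$ forces $\Dphi=\pm\pi/2$. At $\Dphi=\pm\pi/2$ we have $\sin(\Dphi-\chi)=\pm\cos\chi$, so $\dot\Dphi=0$ gives $R^\star=\ell/\cos\chi$ in both cases. The Jacobian entries are
\[
\partial_R\dot R=0,\qquad \partial_\Dphi\dot R=-V_0\sin\Dphi,
\]
\[
\partial_R\dot\Dphi=V_0\sin\Dphi/R^2,\qquad \partial_\Dphi\dot\Dphi=\cos(\Dphi-\chi)/\tau_W-V_0\cos\Dphi/R.
\]
At $\Dphi=\mp\pi/2$ the last entry equals $\mp\sin\chi/\tau_W$, which is the trace. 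The determinant is $V_0^2\sin^2\Dphi/R^{\star2}=\cos^2\chi/\tau_W^2>0$. Hence $\vb{x}_+=(R^\star,-\pi/2)$ is a sink and $\vb{x}_-$ is a source, both of index $+1$. The discriminant $\tr^2-4\det=(\sin^2\chi-4\cos^2\chi)/\tau_W^2$ is negative exactly when $\tan\chi<2$, which gives the focus/node split. The rotation rate is $\dot\theta=V_0\sin\Dphi/R^\star=\mp\cos\chi/\tau_W$. The roles swap for $\chi<0$ by the sign of the trace, and $\chi=0$ returns the centers of Proposition~\ref{prop:sym_eq}. I would also check, as in Theorem~\ref{thm:aniso}, that the disc center $*$ is not an equilibrium. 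In $\Phi$-coordinates the vector field extends continuously to $*$ with the nonzero value $(V_0,0)$, since the $\chi$-term is bounded and the $-V_0\sin\Dphi/R$ term is the usual polar artifact.

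\emph{No periodic orbits.} Take the Dulac weight $\eta(R)=R\,e^{-R\cos\chi/\ell}$. Using $\nabla\cdot\vb{F}=\cos(\Dphi-\chi)/\tau_W-V_0\cos\Dphi/R$, the choice of exponent cancels the $\cos\Dphi$ terms and leaves
\[
\nabla\cdot(\eta\vb{F})=\frac{\sin\chi}{\tau_W}\,\eta\sin\Dphi .
\]
This is not one-signed on $\mathcal{M}$, but since $\eta$ depends only on $R$ it equals $\frac{\sin\chi}{\tau_W}\nabla\cdot(0,-\eta\cos\Dphi)$. Suppose $\Gamma$ is a periodic orbit bounding a region $\Omega$. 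If $\Gamma$ winds around the cylinder, take $\Omega$ to be the strip between $R=0$ and $\Gamma$; both fields have zero flux through $R=0$ because $\eta(0)=0$ and the second field has no $R$-component. Applying the divergence theorem twice gives
\[
0=\oint_\Gamma\eta\,\vb{F}\cdot\mathbf{n}\,ds=\frac{\sin\chi}{\tau_W}\iint_\Omega\eta\sin\Dphi
=\frac{\sin\chi}{\tau_W}\oint_\Gamma\eta\cos\Dphi\,dR=\frac{\sin\chi}{\tau_W}\oint_\Gamma\eta V_0\cos^2\Dphi\,dt .
\]
The last integral is strictly positive, because $\cos\Dphi\equiv0$ on $\Gamma$ would force $\Gamma$ to be an equilibrium. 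This contradiction rules out periodic orbits. Since the only equilibria are a sink and a source, with no saddle, there are also no homoclinic or heteroclinic cycles.

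\emph{Boundedness and conclusion.} The main obstacle is showing that every forward orbit is bounded. For large $R$ one has $\dot\Dphi\approx\sin(\Dphi-\chi)/\tau_W$, whose stable angle is $\Dphi=\chi+\pi$, and there $\dot R\approx-V_0\cos\chi<0$. I would first try to make this rigorous with a trapping argument. Orbits with $R\gg\ell$ cannot have $\Dphi$ linger in the outward sector $\cos\Dphi>0$, because there $\dot\Dphi$ is bounded away from zero except near the unstable angle $\Dphi=\chi$, which is repelling. So each excursion outward gains only bounded radius before $\Dphi$ is captured near $\chi+\pi$, where the radius decreases linearly. Alternatively, I would check whether $H$ from Proposition~\ref{prop:H} with an added $\chi$-dependent correction serves as a Lyapunov-type function for large $R$. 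Once an absorbing disc $D_{R_{\max}}$ is in hand, every $\omega$-limit set is compact and nonempty, and $*$ is a regular point. Poincar\'e--Bendixson then leaves $\vb{x}_+$ as the only admissible $\omega$-limit set, since a source cannot be the $\omega$-limit of any other point. This gives global convergence on $\mathcal{M}\setminus\lcb\vb{x}_-\rcb$, in the manner of Lemma~\ref{lem:toolkit}.
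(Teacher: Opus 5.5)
Your local analysis is correct and matches the paper. Your exclusion of periodic orbits is also correct, but it takes a different route. The paper combines the transversal crossing of the axes from $U$ into $L$ with the Dulac weight $1/\sin\Dphi$ on each half-strip. You instead use the two-field identity with $\eta=Re^{-R\cos\chi/\ell}$, and handle winding orbits through the strip down to $R=0$.

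The genuine gap is the boundedness step, and it cannot be closed because it is false. The repelling angle $\Dphi=\chi$ lies in the outward sector. An angular instability combined with outward radial drift produces an orbit that shadows that angle forever. To see this, take $0<\epsilon<\min(\chi,\pi/2-\chi)$ and $R_1>\ell/\sin\epsilon$, and let $S=\{R\ge R_1,\ |\Dphi-\chi|\le\epsilon\}$. The flow enters $S$ through $R=R_1$, where $\dot R\ge V_0\cos(\chi+\epsilon)>0$. It strictly exits through both sides, since $\dot\Dphi>0$ at $\Dphi=\chi+\epsilon$ and $\dot\Dphi<0$ at $\Dphi=\chi-\epsilon$. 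The segment $\{R_1\}\times[\chi-\epsilon,\chi+\epsilon]$ is connected, but its endpoints leave through different sides. By Wa\.{z}ewski's principle some point of the segment never leaves $S$, and along its orbit $R(t)\ge R_1+V_0\cos(\chi+\epsilon)\,t\to\infty$. This orbit is the continuation of the invariant outward ray $\Dphi=0$ of Appendix~B. Near it the radius gained in an outward excursion is unbounded. So your claim that each excursion gains only bounded radius fails, and no absorbing disc exists. The paper's proof asserts the absorbing disc from the same large-$R$ heuristic, so the defect is in the statement itself. Convergence to $\vb{x}_+$ holds on $\mathcal{M}$ minus $\vb{x}_-$ \emph{and} this escaping orbit, which lies in $U$.

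Your fallback idea proves the corrected statement. With your weight, $H_\chi=\eta\sin\Dphi=Re^{-R\cos\chi/\ell}\sin\Dphi$ is the $H$ of Proposition~\ref{prop:H} with $\ell$ replaced by $\ell/\cos\chi$. It satisfies $\dot H_\chi=-\frac{\sin\chi}{\tau_W}Re^{-R\cos\chi/\ell}\cos^2\Dphi\le 0$.
\begin{itemize}
\item An orbit that enters $L$ has $H_\chi\le H_\chi(t_0)<0$ afterwards. Then $Re^{-R\cos\chi/\ell}\ge|H_\chi(t_0)|$ confines it to a compact annulus, and LaSalle sends it to $\vb{x}_+$, the unique minimizer of $H_\chi$.
\item The axes are crossed only from $U$ into $L$, so every other orbit stays in $U$ forever. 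A bounded such orbit converges by LaSalle to the only equilibrium in $\bar U$, the source $\vb{x}_-$, so it is $\vb{x}_-$ itself.
\item An unbounded orbit in $U$ must eventually stay near $\Dphi=\chi$, because elsewhere in $U$ at large $R$ the flow pushes $\Dphi$ onto an axis. Write it there as a graph $\Dphi(R)$. The right-hand side of $d\Dphi/dR=\dot\Dphi/\dot R$ has $\Dphi$-derivative $(\cos\chi/\ell-1/R)/\cos^2\Dphi>0$ for $R>R^\star$. So two such orbits would separate exponentially in $R$.
\end{itemize}
Hence exactly one trajectory besides $\vb{x}_-$ fails to converge to $\vb{x}_+$. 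The attractor still has full measure, so the one-handed conclusion survives.
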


\begin{proof}
The equilibria and $R^\star$ are found above. Linearizing Eq.~\eqref{eq:chiwall_reduced} at $(R^\star, \pm\pi/2)$, produces
\begin{equation}
    J\big|_{\pm\pi/2} =
        \begin{pmatrix}
            0 & \mp V_0 \\
            \pm V_0/{R^\star}^2 & \pm\sin\chi/\tau_W
        \end{pmatrix}.
\end{equation}
This allows computation of the determinant and trace, respectively: $\det J = \lb V_0/R^\star\rb^2 = \lb\cos\chi/\tau_W\rb^2 > 0,$ and $\tr J = \pm\sin\chi/\tau_W$. We conclude that neither equilibrium is a saddle, and for $\chi > 0$ the sign of the trace makes $\Dphi = +\pi/2$ a source and $\Dphi = -\pi/2$ a sink; the eigenvalues are complex (foci) when $\tr^2 < 4\det$, i.e.\ $\tan^2\chi < 4$. This is precisely where $\chi$ acts: the even term $-\sin\chi\cos\Dphi$, invisible at the equilibria, is what supplies the $\pm\sin\chi/\tau_W$ in the trace.\\

\noindent For the global statement we appeal to Lemma~\ref{lem:toolkit}. As $R \to \infty$ the term $-(V_0/R)\sin\Dphi$ drops out and $\dot\Dphi \to \sin\lb\Dphi - \chi\rb/\tau_W$ drives $\Dphi$ to its stable rest angle $\Dphi = \chi + \pi$, where $\dot R = -V_0\cos\chi < 0$; hence some disk
$D_{R_{\max}}$ is absorbing and all forward orbits are bounded. At the axes, the flow is strictly transverse, $\dot\Dphi = -\sin\chi/\tau_W$ at $\Dphi = 0$ and $+\sin\chi/\tau_W$ at $\Dphi = \pi$, carrying the upper half $U = \lcb 0 < \Dphi < \pi\rcb$ into the lower half $L = \lcb -\pi < \Dphi < 0\rcb$; thus, $L$ is forward-invariant. The multiplier $\eta = 1/\sin\Dphi$ gives $\nabla\cdot\lb\eta\,\vb{F}\rb = \sin\chi/(\tau_W\sin^2\Dphi) > 0$ on each half of the domain, excluding periodic orbits; the only equilibrium $U$ contains is the source $\vb{x}_{-}$, so by Poincar\'e--Bendixson every trajectory except $\vb{x}_{-}$ leaves $U$ and enters $L$ in finite time. On $L$, forward-invariant, simply connected, and holding the single non-saddle equilibrium $\vb{x}_{+}$, Lemma~\ref{lem:toolkit} applies and gives convergence to $\vb{x}_{+}$ equilibrium.
\end{proof}

\section{Center cell-substrate adhesive force}
\label{sec:anchor}


We now turn on the radial force $f(R)$ in Eq.~\ref{eq:gen_master} for an unbiased cell ($\mu = 0$) on an isotropic substrate ($\alpha = 0$) with an achiral wall ($\chi = 0$). Writing $\kappa = k_0/k_W$ for the anchor-to-confinement stiffness strength ratio and setting the disk radius to one, the confinement force of Appendix~A is $f(R) = k_W\lsb 1 - (1 + \kappa) R \rsb$; namely, a restoring force, $f'(R) = -k_W\lb 1 + \kappa\rb < 0$, for $\kappa > -1$.\\

\noindent For $f \neq 0$, the velocity is no longer parallel to the polarity orientation, and therefore the signed velocity-alignment term of Eq.~\eqref{eq:gen_perp} no longer vanishes. Instead, it contributes $-f(R)\sin\Dphi/\lb\gamma_\text{pol}\tau_{\rm VA}\rb$ to $\dot\varphi$. Like the dynamics of the frictional cell-substrate anisotropy in Appendix~D, this surviving torque makes the flow
dissipative and destroys the conserved Hamiltonian quantity. Unlike the
chiral confinement interaction in Appendix~E, however, it does so \emph{symmetrically}: the two rotating states are not split in stability, and each instead becomes an attracting spiral, leaving the system perfectly balanced between them.\\

\noindent Specializing the fundamental reduced system in Eq.~\eqref{eq:gen_master} ($\alpha = \delta = \chi = \mu = 0$) gives the reduced flow on $\mathcal{M}$,
\begin{equation}\label{eq:anchor_reduced}
    \dot R = \underbrace{V_0\cos\Dphi + \frac{f(R)}{\xi}}_{P(R, \Dphi)},
    \quad
    \dot{\Dphi} = \underbrace{\sin\Dphi\, G(R)}_{Q(R, \Dphi)},
\end{equation}
with the $R$-dependent rate
\begin{equation}\label{eq:anchor_G}
\begin{aligned}
    G(R) &= \frac{1}{\tau_W} - \frac{V_0}{R} - \frac{f(R)}{\gamma_\text{pol}\tau_{\rm VA}},\\
    G'(R) &= \frac{V_0}{R^2} + \frac{k_W\lb 1 + \kappa\rb}{\gamma_\text{pol}\tau_{\rm VA}} > 0.
\end{aligned}
\end{equation}
The decisive feature is that the whole $\Dphi$-dependence of $\dot{\Dphi}$ is carried by the prefactor $\sin\Dphi$, with $G$ a function of $R$ alone. This feature is what keeps the two directions of rotation exactly even and makes the global structure rigid.

\begin{theorem}[Global bistability]\label{thm:anchor}
Let $\alpha = \chi = \mu = 0$ and $f \neq 0$ with $\kappa > -1$ (so $f' < 0$). The reduced flow in Eq.~\eqref{eq:anchor_reduced} on $\mathcal{M}$ has the following structure:
\begin{enumerate}
    \item \textbf{Invariant axes.} The lines $\Dphi = 0$ and $\Dphi = \pi$ are invariant and split $\mathcal{M}$ into two strips
    $\mathcal{S}^{+} = \lcb 0 < \Dphi < \pi \rcb$ and $\mathcal{S}^{-} = \lcb -\pi < \Dphi < 0 \rcb$, exchanged by the reflection $\Dphi \to -\Dphi$;

    \item \textbf{One sink per strip.} Each strip contains exactly one interior equilibrium, at the common radius $R^\star$ fixed by $G(R^\star) = 0$ and $\Dphi^{\star}_{\pm} = \pm\arccos\lb -f(R^\star)/\gamma_\text{pol} \rb$. It is an asymptotically stable spiral/node,
    \begin{equation}
        \tr J = \frac{f'(R^\star)}{\xi} < 0,
        \quad
        \det J = V_0 \sin^2\Dphi^{\star}\, G'(R^{\star}) > 0,
    \end{equation}
    and the two sinks are mirror images, with $\dot\theta = V_0\sin\Dphi^{\star}/R^{\star}$ of equal magnitude and opposite
    sign;

    \item \textbf{Axis saddles.} On the invariant lines lie hyperbolic saddles at $(R_{\rm out}, 0)$ with $f(R_{\rm out}) = -\gamma_\text{pol}$, and (when $k_W > \gamma_\text{pol}$) at $(R_{\rm in}, \pi)$ with $f(R_{\rm in}) = \gamma_\text{pol}$;

    \item \textbf{No periodic orbits.} The Dulac multiplier $\eta = 1/\sin\Dphi$ gives $\nabla\cdot\lb\eta\,\vb{F}\rb = f'(R)/\lb \xi \sin\Dphi \rb$, of one strict sign on each strip.
\end{enumerate}
Consequently the system is globally bistable: every trajectory off the axes converges to one of the two mirror-image sinks, the basins separated by the stable manifolds of the axis saddles. An unbiased ensemble therefore splits 50/50 between CW and CCW directions.
\end{theorem}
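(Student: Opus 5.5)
The plan is to verify the four structural items in turn, directly from the product form $Q = \sin\Dphi\,G(R)$ and the monotonicity of $G$ and $f$. Then I assemble the global statement with the convergence criterion of Lemma~\ref{lem:toolkit} applied to each strip separately.

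\emph{Invariant axes and equilibria.} Item~1 is immediate: $Q$ vanishes on $\sin\Dphi=0$. By Lemma~\ref{lem:Z2}, with $\mu=\chi=\delta=0$, the flow commutes with $\Dphi\to-\Dphi$, so $\mathcal{S}^\pm$ are exchanged.

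For item~2, an interior equilibrium in a strip has $\sin\Dphi\ne0$, so $G(R^\star)=0$. Uniqueness of $R^\star$ follows from $G'>0$, together with $G\to-\infty$ as $R\to0^+$ and $G\to+\infty$ (or at least $G>0$ for large $R$, using the linear growth of $-f$). Then $P=0$ gives $\cos\Dphi^\star=-f(R^\star)/\gamma_{\rm pol}$. This requires $|f(R^\star)|<\gamma_{\rm pol}$, which I will either verify or take as the standing nondegeneracy assumption. Once it holds, the arccos gives exactly one solution in each strip.

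The Jacobian at the equilibrium is
\[
\begin{pmatrix} f'/\xi & -V_0\sin\Dphi^\star \\ \sin\Dphi^\star G'(R^\star) & \cos\Dphi^\star G(R^\star)\end{pmatrix},
\]
and since $G(R^\star)=0$ the lower-right entry vanishes. Hence $\tr J=f'/\xi<0$ and $\det J=V_0\sin^2\Dphi^\star\,G'>0$, so each equilibrium is a sink. The formula $\dot\theta=V_0\sin\Dphi^\star/R^\star$ follows because the radial force contributes nothing to $v_\theta$.

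For item~3, on $\Dphi=0$ the equilibria solve $V_0+f/\xi=0$, i.e.\ $f(R_{\rm out})=-\gamma_{\rm pol}$. There the Jacobian is diagonal with entries $f'/\xi<0$ and $G(R_{\rm out})$. To get a saddle I need $G(R_{\rm out})>0$, i.e.\ $R_{\rm out}>R^\star$. This follows from the monotonicity of $f$ combined with $f(R^\star)>-\gamma_{\rm pol}$, which is exactly the arccos existence condition. The case $\Dphi=\pi$ is analogous: $f(R_{\rm in})=\gamma_{\rm pol}$ requires $f(0)=k_W>\gamma_{\rm pol}$, and then $R_{\rm in}<R^\star$ gives $G<0$, while the $\Dphi$-entry is $\cos\pi\,G=-G>0$.

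For item~4, with $\eta=1/\sin\Dphi$ we have $\partial_R(P/\sin\Dphi)=f'/(\xi\sin\Dphi)$ and $\partial_\Dphi G=0$. This gives the stated divergence, which has one strict sign on each strip.

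\emph{Global assembly.} Each strip $\mathcal{S}^\pm$ is open, simply connected, and forward invariant, since its boundary axes are invariant. I will show that forward orbits are bounded. For large $R$, $f(R)/\xi\approx-k_W(1+\kappa)R/\xi$ dominates $V_0$, so $\dot R<0$ and a disc $D_{R_{\max}}$ is absorbing. Near $R\to0$, $f(0)=k_W>0$ pushes $R$ outward, so orbits stay away from the puncture.

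Lemma~\ref{lem:toolkit} cannot be applied verbatim, because the closure of the strip contains the axis saddles. So I run Poincaré–Bendixson directly inside the strip. An $\omega$-limit set of an orbit in the open strip is either the sink, a cycle (excluded by the Dulac argument), or a set containing saddles on the boundary axes. In the last case the orbit converges to a saddle along its stable manifold, which lies in the strip, or the limit set is a saddle connection.

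I expect ruling out saddle connections along the boundary of the strip to be the main obstacle. The axis segments between the saddles form invariant arcs, and a polycycle would have to close up through the strip interior. I would exclude it with the Dulac multiplier, via Green's theorem on the region bounded by the polycycle, exactly as for periodic orbits. The flux of $\eta\mathbf{F}$ across invariant curves vanishes, but the singularity of $\eta$ on the axes needs care. A fallback is a direct monotonicity check of the axis dynamics, where $\dot R$ changes sign only at $R_{\rm in}$ and $R_{\rm out}$.

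Once this is done, every off-axis trajectory not on a stable manifold converges to the strip's sink, and the stable manifolds of the saddles separate the basins. The 50/50 split then follows from the reflection symmetry, which maps one basin onto the other and preserves the symmetric distribution of initial conditions.
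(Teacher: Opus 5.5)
\textbf{Items 1--4.} These match the paper's proof essentially line for line: invariance of the axes from $Q=\sin\Dphi\,G(R)$, uniqueness of $R^\star$ from $G'>0$, a Jacobian in which $Q_{\Dphi}=\cos\Dphi\,G$ vanishes at the interior equilibria and $Q_R=\sin\Dphi\,G'$ vanishes on the axes, the ordering $R_{\rm in}<R^\star<R_{\rm out}$, and the Dulac multiplier $1/\sin\Dphi$.

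Your caveat $|f(R^\star)|<\gamma_{\rm pol}$ is well placed, but it cannot be verified in general. For a stiff enough anchor (large $\kappa$) one has $G(R_{\rm out})=1/\tau_W+1/\tau_{\rm VA}-V_0/R_{\rm out}<0$. Then $(R_{\rm out},0)$ is a stable node and there are no interior equilibria. So the condition must be a standing assumption, which the paper's proof also makes tacitly when it writes $-\gamma_{\rm pol}<f(R^\star)<\gamma_{\rm pol}$.

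\textbf{The global step.} This is where the real problem lies. You are right that Lemma~\ref{lem:toolkit} does not literally cover $\omega$-limit sets that touch the saddles on $\partial\mathcal{S}^\pm$; the paper invokes it anyway. But you misplace the saddles' stable manifolds. Your own Jacobian at $(R_{\rm out},0)$ is $\mathrm{diag}(f'/\xi,\,G(R_{\rm out}))$, so the stable eigendirection is $\partial_R$, tangent to the axis. On $\Dphi=0$, $\dot R=V_0+f(R)/\xi$ is positive for $R<R_{\rm out}$ and negative for $R>R_{\rm out}$. Hence $W^s(R_{\rm out},0)$ is the entire axis $\Dphi=0$. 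Likewise $W^s(R_{\rm in},\pi)$ is the entire axis $\Dphi=\pi$ when $k_W>\gamma_{\rm pol}$, and the unstable manifolds are the transverse branches entering the open strips.

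Your stated conclusion, convergence for off-axis orbits ``not on a stable manifold'', is therefore weaker than the theorem. It also posits separatrices inside a strip that contains only one sink. The stable manifolds that ``separate the basins'' in the statement are the axes themselves, and the basins are exactly $\mathcal{S}^\pm$.

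\textbf{How the correction fixes it.} Once the stable manifolds are placed on the axes, your ``main obstacle'' disappears:
\begin{itemize}
\item An interior orbit cannot converge to a hyperbolic saddle, since it would have to lie on $W^s$, i.e.\ on an axis.
\item There are no saddle connections at all, because $W^u\setminus\{\text{saddle}\}$ lies in the open strips while $W^s$ lies on the axes.
\end{itemize}
Hence a bounded $\omega$-limit set in $\overline{\mathcal{S}^\pm}$ that stays away from $R=0$ can only be the sink, by Poincar\'e--Bendixson with Dulac excluding cycles. Your fallback, the sign of $\dot R$ along the axes, is exactly the ingredient needed. The Green's-theorem route is unnecessary, and it would be delicate because $1/\sin\Dphi$ blows up on precisely the curves you would integrate along.

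\textbf{The puncture.} Your claim that ``$f(0)=k_W>0$ pushes $R$ outward'' holds only when $k_W>\gamma_{\rm pol}$, i.e.\ exactly when $R_{\rm in}$ exists. For $k_W<\gamma_{\rm pol}$ one has $\dot R<0$ near $(R,\Dphi)=(0,\pi)$. What then keeps off-axis orbits away from the puncture is the angular term $\dot\Dphi\approx -V_0\sin\Dphi/R$, which drives them off $\Dphi=\pm\pi$ toward $\Dphi=0$, where $\dot R>0$.

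Treating the puncture separately is nonetheless the right instinct. The paper leans on the compactification of Appendix~A. But for $f(0)\neq 0$, in the coordinates $(u,v)=\Phi(R,\Dphi)$ the limit of $(\dot u,\dot v)$ as $R\to 0$ is $(V_0,0)+(k_W/\xi)(\cos\Dphi,\sin\Dphi)$. This depends on the direction of approach, so the compactified field is not continuous at the center.
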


\begin{proof}
On $\Dphi \in \lcb 0, \pi \rcb$ we have $\sin\Dphi = 0$, so
$Q = \sin\Dphi\, G(R) = 0$ and the axes are invariant; the reflection
$\Dphi \to -\Dphi$ sends $(P, Q) \to (P, -Q)$, exchanging the strips.\\

\noindent For the equilibria, $\dot R = 0$ gives $\cos\Dphi^\star = -f(R^\star)/\gamma_\text{pol}$ (using $\xi V_0 = \gamma_\text{pol}$).
Off the axes $\dot\Dphi = 0$ forces $G(R^\star) = 0$, which has a unique root because $G' > 0$; the two angles $\Dphi^\star_\pm = \pm\arccos\lb -f(R^\star)/\gamma_\text{pol}\rb$ place one
interior equilibrium in each strip. On the axes instead ($\sin\Dphi = 0$), $\dot R = 0$ needs $f(R) = \mp\gamma_\text{pol}$, giving the axis equilibria $R_{\rm out}$ ($\Dphi = 0$) and $R_{\rm in}$ ($\Dphi = \pi$, present when $k_W > \gamma_\text{pol}$).\\

\noindent The Jacobian in Eq.~\eqref{eq:anchor_reduced} simplifies at every equilibrium because its angular entries $Q_R = \sin\Dphi\, G'(R)$ and $Q_\Dphi = \cos\Dphi\, G(R)$ vanish in turn. At an interior equilibrium, $G(R^\star) = 0$ cancels $Q_\Dphi$, so
$\tr J = P_R = f'(R^\star)/\xi < 0$ and
$\det J = -P_\Dphi Q_R = V_0\sin^2\Dphi^\star\, G'(R^\star) > 0$: an asymptotically stable spiral or node. At an axis equilibrium. $\sin\Dphi = 0$ cancels $Q_R$, so $\det J = P_R\, Q_\Dphi = \lb f'(R)/\xi\rb\cos\Dphi\, G(R)$, a saddle ($\det J < 0$) at both crossings: they straddle the interior radius, since
$f(R_{\rm out}) = -\gamma_\text{pol} < f(R^\star) < \gamma_\text{pol} = f(R_{\rm in})$ with $f$ decreasing gives $R_{\rm in} < R^\star < R_{\rm out}$, and $G$ increasing gives $G(R_{\rm in}) < 0 < G(R_{\rm out})$.\\

\noindent Finally, with $\eta = 1/\sin\Dphi$, $\nabla\cdot\lb\eta\,\vb{F}\rb = P_R/\sin\Dphi = f'(R)/\lb\xi\sin\Dphi\rb$, which is $< 0$ on $\mathcal{S}^{+}$ and $> 0$ on $\mathcal{S}^{-}$. As $R \to \infty$, $\dot R \approx f(R)/\xi \to -\infty$, so some disk $D_{R_{\max}}$ is absorbing and all forward orbits are bounded. Each open strip is forward-invariant (the axes are
invariant), simply connected, and contains the single non-saddle equilibrium of its sink, so Lemma~\ref{lem:toolkit} applies on each strip: every trajectory in $\mathcal{S}^{\pm}$ converges to its sink, and the axis saddles with their stable manifolds partition the basins.
\end{proof}

\subsection*{Combining a symmetry-breaking mechanism with the adherent center anchor}

On its own, the anchor is globally bistable (Theorem~\ref{thm:anchor}): a CW and a CCW sink with equal basins, so an unbiased ensemble splits 50/50. Switching on any one of the remaining couplings (the intrinsic bias $\mu$, the chiral wall $\chi$, or the anisotropic friction $\alpha,\delta$) breaks that reflection and makes the two basins of attraction unequal: one sink is favored, its basin grows, and the ensemble handedness tilts continuously with the coupling strength. Only at the extreme does the disfavored basin close and the population become one-handed. Each coupling is thus a tunable mechanism for the CW/CCW distribution. Details are provided below.

Both sinks stay stable as the coupling turns on. At an anchored sink
$G(R^\star) = 0$, so the angular part of the Jacobian vanishes and only the radial stiffness survives in the trace; a coupling reinstates an angular contribution, and linearizing the flow in Eq.~\eqref{eq:anchor_reduced} at an interior sink (where
$\cos\Dphi^\star = -f(R^\star)/\gamma_\text{pol}$) gives
\begin{equation}\label{eq:anchor_trace}
    \tr J =
    \begin{cases}
        \dfrac{f'(R^\star)}{\xi}, & \text{anchor alone},\\[8pt]
        \dfrac{f'(R^\star)}{\xi} - \mu\cot\Dphi^\star, & \text{with intrinsic bias } \mu,\\[8pt]
        \dfrac{f'(R^\star)}{\xi} + \dfrac{\sin\chi}{\tau_W \sin\Dphi^\star}, & \text{with chiral wall } \chi.
    \end{cases}
\end{equation}
The radial term $f'(R^\star)/\xi < 0$ always pulls the trace in the negative direction; whether a coupling can overturn it decides the disfavored sink's outcome. The basin boundary is the stable manifold of the axis saddle, so tilting the basins amounts to sweeping that separatrix toward the disfavored sink, which can then disappear in one of two ways: it collides with the axis saddle and annihilates in a saddle-node bifurcation, or, if the coupling drives its trace through zero, it loses stability in a Hopf bifurcation. Which one of these routes emerges depends on the symmetry breaking mechanism (and is discussed below).
\begin{itemize}[itemsep=0pt]
\item\emph{\underline{Intrinsic bias}} ($\mu$) enters $\dot\Dphi$ additively and slides the equilibria (now $G(R^\star) = -\mu/\sin\Dphi^\star \neq 0$) without destabilizing either sink, so the disfavored state is lost by the saddle-node bifurcation route. The continuation in $(\mu, k_0)$ is a bistable split bounded by saddle-node curves; strong enough $\mu$ exits it into a unique globally stable CW or CCW state (Fig.~\ref{fig:anchor}, top row).

\item\emph{\underline{Anisotropic friction}} $(\alpha,\delta)$ tilts the basins the same way as the bias, but its reach is limited. The offset $\delta$ is a state-independent effective torque that slides the equilibria off $\pm\pi/2$ without destabilizing a sink, and within the physical range $\alpha > -1$ it is too weak to drive the disfavored sink into the saddle. Both states therefore survive: the $(\alpha,k_0)$ plane stays bistable almost everywhere, skewing the ensemble rather than making it one-handed (Fig.~\ref{fig:anchor}, bottom row), with non-generic behavior confined to the edge $\alpha \to -1$ where the mobility degenerates.

\item\emph{\underline{Chiral confinement}} ($\chi$) is the exception. Expanding $\sin\lb\Dphi - \chi\rb$ produces a parity-even $\cos\Dphi$ term that the bias and the mobility do not supply, and it adds the $\sin\chi/(\tau_W\sin\Dphi^\star)$ to the trace in Eq.~\eqref{eq:anchor_trace}. A strong enough $\chi$ then drives the disfavored sink's trace through zero, closing its basin by the Hopf bifurcation route instead of the saddle-node (Fig.~\ref{fig:anchor}, middle row).
\end{itemize}